\documentclass[runningheads,orivec]{llncs}
\usepackage[ruled]{algorithm}
\usepackage[noend]{algpseudocode}

\usepackage{graphicx} 
\usepackage{lineno}
\usepackage[T1]{fontenc}
\usepackage{amssymb} 
\usepackage{etoolbox}
\usepackage{makecell}

\makeatletter
\newcommand{\lncsqed}{%
  \ifhmode
    \qed
  \else
    \hfill\qed\par
  \fi
}
\makeatother

\let\lncsproof\proof
\let\endlncsproof\endproof
\renewenvironment{proof}{\lncsproof}{\lncsqed\endlncsproof}
\usepackage{tikz-cd}
\usepackage{array}
\usepackage{mdframed}
\usepackage{thm-restate}
\usepackage{pgf-pie}
 \usepackage{mathrsfs}
 \usepackage{enumerate}
 \usepackage{enumitem}

\usepackage{underscore}
\usepackage{pgfornament}
\usepackage{scalerel}
\usepackage{MnSymbol}
\usepackage{csquotes}
\usepackage{colortbl}
\usepackage{hyperref}
\usepackage{cleveref}
\usepackage{color}

\usepackage{afterpage} 
\usepackage{hhline}
\usepackage{multirow}
\usepackage{longtable}
\usepackage{adjustbox}

\usepackage[T1]{fontenc}
\usepackage[ttdefault=true]{AnonymousPro}

\newcommand{\Exists}[1]{\exists\,#1.\:}

\newcommand{\Forall}[1]{\forall\,#1.\:}

\newcommand{\A}{\ensuremath{\mathcal{A}}}

\newcommand{\B}{\ensuremath{\mathcal{B}}}

\newcommand{\T}{\ensuremath{\mathcal{T}}}

\newcommand{\G}{\ensuremath{gentle}}

\newcommand{\N}{\ensuremath{\mathbb{N}_{\omega}}}

\newcommand{\minmod}{\ensuremath{\mathrm{minmod}}}

\newcommand{\Teven}{\ensuremath{\T_{even}^{\infty}}}

\newcommand{\NNEQ}[1]{\ensuremath{\neq(#1_{1},\ldots,#1_{n})}}

\newcommand{\distinct}[1]{\NNEQ{x}}

\renewcommand{\int}[2]{\mathcal{#1}/\mathcal{#2}}

\newcommand{\Tle}{\ensuremath{\T_{\leq}}}

\newcommand{\Spn}{\ensuremath{\Sigma_{P}^{\mathbb{N}}}}

\newcommand{\No}{\ensuremath{\mathbb{N}^{*}}}

\newcommand{\spec}{\ensuremath{\textit{Spec}}}

\newcommand{\dom}[1]{\ensuremath{dom(#1)}}

\newcommand{\Test}[1]{\ensuremath{\T_{\texttt{#1}}}}

\newcommand{\class}[1]{\ensuremath{\mathfrak{T}_{\text{#1}}}}

\newcommand{\omg}[1]{\ensuremath{#1_{\omega}}}

\newcommand{\Testcfs}{\ensuremath{\T_{=}^{P}}}

\newcommand{\Gal}{\ensuremath{G}}

\newcommand{\unc}{\ensuremath{U}}

\begin{document}

\title{
Two generalizations of shininess
}

\author{
\begin{minipage}{0.4\textwidth}
\centering
Guilherme V. Toledo
\end{minipage}
 \and
 \begin{minipage}{0.4\textwidth}
\centering
Yoni Zohar
\end{minipage}
}
\institute{
\begin{minipage}{0.4\textwidth}
\centering
University of São Paulo, Brazil
\end{minipage}
\begin{minipage}{0.4\textwidth}
\centering
Bar-Ilan University, Israel
\end{minipage}
}
\maketitle              

\begin{abstract}
Shininess is one of the main theory combination methods in Satisfiability Modulo Theories.
We present two infinite families of generalizations of it:
one involving a property between gentleness and shininess, which helps us see gentleness itself as a variation of shininess;
and another having no further computability requirements on the weaker theory, in addition dropping the need for smoothness.
We prove these methods sharp, and compare them with existing methods.\footnote{Funding: U.S-Israel Binational Science Foundation (BSF) grant 2024049, and Israel Science Foundation (ISF) grants 619/21 and 209/25.}
\end{abstract}

\section{Introduction}

\textbf{SAT}, standing for \underline{sat}isfiability, is the area of computer science which studies the efficiency of methods that decide whether a Boolean formula is satisfiable.
Extending on this, \underline{S}atisfiability \underline{M}odulo \underline{T}heories, or \textbf{SMT} for short, automatizes the process of establishing whether a first-order formula is satisfied, now with respect to a first-order theory.
In this context, if a theory is the combination of other two, we would like to combine the algorithms of the latter into an algorithm for the former:
there is a variety of methods to accomplish this, and the sub-field in SMT that deals with this problem is called {\em theory combination.}

There are three combination methods that have become widespread.
First and foremost, the Nelson--Oppen method \cite{NelsonOppen}, requiring the two theories to be combined to be decidable and stably infinite.\footnote{All technical properties of theories, as well as other notions, will be formally reviewed in the next section.}
In a close second place, the polite method, which requires one theory to be decidable and strongly polite while the other is allowed to be simply decidable \cite{polite,JB10-TR};
it is known, thanks to \cite{CasalRasga,CasalRasga2} that, for decidable theories, strong politeness is equivalent to shininess \cite{shiny}, a property that is more difficult to implement but is easier to understand.
This is the method used by some of the most robust SMT-solvers, such as cvc5~\cite{cvc5}.
And, finally, the gentle method \cite{gentle}, which requires a theory to be gentle and the other to be decidable and have computable finite spectra \cite{CADE25}.

These methods are all {\em sharp} \cite{POPL}, e.g.:
a decidable theory that can be combined with all decidable theories must be strongly polite or, equivalently, shiny.
The machinery necessary to prove sharpness of combination methods, involving the notions of test theories and a Galois connection, also provides a reasonably effective approach to finding new combination methods.
We showcase this feature by producing two infinite families of methods closely resembling shiny combination.

The first is inspired by a method from \cite{POPL} involving the notions of $n$-decidability and $n$-shininess.
It offers a new way of understanding gentleness as a variation of shininess, helping contextualize that method.
The second comes from dropping smoothness from the definition of shininess, a quite restrictive and often undesirable requirement, which imposes no computability requisites other than decidability on the weaker theory to be combined
(much like the Nelson--Oppen and strongly polite combination methods).

We start with some basic notions necessary for the rest of the paper in \Cref{sec:prem}.
\Cref{sec:oldmethod,sec:newmethod} give two sets of generalizations of shininess.
In \Cref{sec:test} we prove that these new combination results are sharp, in the sense that they cannot be theoretically strengthened.
A brief conclusion and plans for future work can be found in \Cref{sec:conclusion}.

\section{Preliminaries}
\label{sec:prem}
We denote $\mathbb{N}\setminus\{0\}$ by $\No$,  for any $S\subseteq \mathbb{N}$ the set $S\cup\{\aleph_{0}\}$ by $S_{\omega}$, and
 $\{n\in \N : p\leq n\leq q\}$ by $[p,q]$.

\subsection{First-order logic}
\label{sec:fol}
We review standard definitions and notations regarding first-order logic. For more details, see, e.g., \cite{Enderton}.
A \textbf{signature} $\Sigma$ is a countable\footnote{A set is called {\em countable} if its cardinality is either finite or $\aleph_0$.} set of function and predicate symbols with fixed arities and containing equality.
Two signatures are \textbf{disjoint} if they share only the equality symbol;
Fixed a countably infinite set of variables, we define literals, formulas and sentences (formulas with no free variables) as usual;
let $QF(\Sigma)$ be the set of quantifier-free $\Sigma$-formulas.

A \textbf{$\Sigma$-interpretation} $\A$ is a non-empty set $\dom{\A}$ (its \textbf{domain}) equipped with functions $f^{\A}$ and predicates $P^{\A}$ over $\dom{\A}$ of the same arity as their corresponding symbols, and values $x^{\A}\in\dom{\A}$ assigned to all variables $x$.
We define the truth-value of a formula $\varphi$ in an interpretation $\A$ in the usual way, and if that value is true we write $\A\vDash\varphi$.
In \Cref{card-formulas} the formula $\neq(x_{1},\ldots,x_{n})$ states the variables $x_{1}$ through $x_{n}$ all take different values (for $n\geq 2$, when $n=1$ we can take it to be any tautology);
the formulas $\psi_{\geq n}$, $\psi_{=n}$ and $\psi_{\leq n}$ state that there are  at least, exactly, and at most $n$ elements in the domain, respectively.

\begin{figure}[t]
\begin{mdframed}
\begin{equation*}
\begin{aligned}
    \neq(x_{1},\ldots,x_{n})= \bigwedge_{i=1}^{n-1}\bigwedge_{j=i+1}^{n}\neg(x_{i}=x_{j})\\
    \\\vspace{-4mm}
    \psi_{\geq n}= \Exists{{x_1,\ldots, x_n}}\NNEQ{x}
\end{aligned}
\quad
\begin{aligned}
\psi_{\leq n}= \Exists{x_1,\ldots,x_n}\Forall{y}\bigvee_{i=1}^{n}y=x_{i}\\
\\\vspace{-4mm}
\psi_{=n}=\psi_{\geq n}\wedge\psi_{\leq n}
\end{aligned}
\end{equation*}
\end{mdframed}
\vspace{-5mm}
\caption{Cardinality formulas}
\label{card-formulas}
\vspace{-5mm}
\end{figure}

A \textbf{$\Sigma$-theory} $\T$ is a set of $\Sigma$-sentences, called its \textbf{axioms}, and
if an interpretation satisfies all of them it is said to be a \textbf{$\T$-interpretation}.
A formula is \textbf{$\T$-satisfiable} if there is at least one $\T$-interpretation that satisfies it.
A $\Sigma$-theory $\T$ is \textbf{decidable} when there is an algorithm that takes a quantifier-free $\Sigma$-formula and returns whether or not it is $\T$-satisfiable.
Given signatures $\Sigma_1$ and $\Sigma_2$, $\Sigma_{1}\sqcup\Sigma_{2}$ is their disjoint union, which is the union of their symbols after renaming them to avoid overriding;
and given a $\Sigma_{1}$-theory $\T_{1}$ and a $\Sigma_{2}$-theory $\T_{2}$, their \textbf{disjoint combination} $\T_{1}\sqcup\T_{2}$ is the $\Sigma_{1}\sqcup\Sigma_{2}$-theory with axiomatization equal to the union of $\T_{1}$ and $\T_{2}$ (after possible renaming). 

The \textbf{spectrum} $\spec_{\T}(\varphi)$ of a formula $\varphi$ in a theory $\T$ is the subset of $\N$ of those $n$ such that there exists a $\T$-interpretation
$\A$ with 
$\A\vDash\varphi$ and $|\dom{\A}|=n$.
We often make use of the L\"owenheim--Skolem theorem as stated below.
\begin{theorem}[{\cite[Theorems~2.3.4 and 2.3.7]{marker2002}}] \label{LowenheimSkolem}
    Let $\Delta$ be a set of formulas over a countable signature, and let $\kappa \ge \aleph_0$. 
    Then, $\Delta$ is satisfied by an interpretation of size $\kappa$ if and only if it is satisfied by an interpretation of size $\aleph_0$.
\end{theorem}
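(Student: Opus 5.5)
The statement is the classical downward/upward Löwenheim--Skolem theorem for a countable signature, and I will prove its two directions separately. Throughout, I pass to an expanded signature $\Sigma'$ in which the free variables of $\Delta$ are treated as fresh constant symbols. Since the set of variables is countable, $\Sigma'$ is still countable. Satisfying $\Delta$ by an interpretation (with its variable assignment) is then the same as having a model of a set of $\Sigma'$-sentences. So it suffices to show: a set of sentences over a countable signature has a model of size $\kappa\geq\aleph_0$ if and only if it has a model of size $\aleph_0$.

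\emph{From $\kappa$ to $\aleph_0$ (downward).} Let $\A$ satisfy $\Delta$ with $|\dom{\A}|=\kappa$. I would Skolemize: for each $\Sigma'$-formula $\exists y\,\phi(\bar{x},y)$, use the axiom of choice to pick a function $F_\phi$ on $\dom{\A}$ with $F_\phi(\bar{a})$ a witness whenever one exists. Start from a countably infinite subset $X_0\subseteq\dom{\A}$; this exists because $\kappa$ is infinite, and it also includes the interpretations of the constants. Close $X_0$ under all the $F_\phi$ and all function symbols by iterating $\omega$ times. Because there are only countably many formulas and symbols, the closure $X$ is countable, and it is infinite since it contains $X_0$. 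Next I would verify the Tarski--Vaught test: $X$ is the domain of a substructure that is elementary in $\A$, proved by induction on formulas using the Skolem witnesses for the existential step. Hence this substructure satisfies $\Delta$ and has size exactly $\aleph_0$.

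\emph{From $\aleph_0$ to $\kappa$ (upward).} Let $\B$ be a countable model of $\Delta$. Add $\kappa$ fresh constants $c_\alpha$ and consider $\Gamma=\mathrm{Th}(\B)\cup\{c_\alpha\neq c_\beta:\alpha\neq\beta\}$. Every finite subset of $\Gamma$ mentions only finitely many new constants, so it is satisfiable in $\B$ by interpreting them as distinct elements; this uses that $\B$ is infinite. By compactness $\Gamma$ has a model $\mathcal{C}$, and $|\dom{\mathcal{C}}|\geq\kappa$. The signature of $\Gamma$ has size $\kappa$, so the downward argument, redone with $\kappa$ in place of $\aleph_0$, gives an elementary substructure of $\mathcal{C}$ that contains all the $c_\alpha$ and has size $\leq\kappa$. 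Its size is therefore exactly $\kappa$. Its reduct to $\Sigma'$ satisfies $\Delta$, and translating the constants back to variables finishes this direction.

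The main obstacle is purely bookkeeping: the cardinality count in the Skolem closure. One has to check that closing a set of size $\lambda\geq|\text{signature}|+\aleph_0$ under countably many finitary functions (respectively, $\kappa$ many in the upward direction) stays of size $\lambda$. The Tarski--Vaught induction is routine. Since the paper cites these as Marker's Theorems~2.3.4 and 2.3.7, I would simply invoke those results and treat the variable-to-constant translation as the only genuinely new step.
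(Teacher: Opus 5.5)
Your proof is correct and is the standard textbook argument: Skolem closure of a countable set plus the Tarski--Vaught test for the downward direction, and compactness with $\kappa$ fresh constants followed by a downward cut to size exactly $\kappa$ for the upward direction. The paper gives no proof of its own and simply cites Marker's Theorems~2.3.4 and 2.3.7, which are these results, and your variable-to-constant translation correctly handles the paper's convention that interpretations also assign values to variables.
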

The following is a classical result we make use of,
obtained as an immediate corollary from compactness.



\begin{restatable}{corollary}{corofcomp}
\label{compactness2}
If $\aleph_0\notin\spec_{\T}(\phi)$,
then $\spec_{\T}(\phi)$ is finite.
\end{restatable}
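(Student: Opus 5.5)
We argue by contraposition: assuming $\spec_{\T}(\phi)$ is infinite, we show $\aleph_0\in\spec_{\T}(\phi)$. Since $\spec_{\T}(\phi)\subseteq\N$ and it is infinite, it must contain infinitely many finite cardinals (there is only one infinite element, $\aleph_0$, in $\N$). So for every $n\in\mathbb{N}$ there is some $m\geq n$ and a $\T$-interpretation $\A_m$ of size $m$ with $\A_m\vDash\phi$.

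Next we apply compactness. Take fresh variables $y_1,y_2,\ldots$ not occurring in $\phi$ (possible since the set of variables is countably infinite and $\phi$ has finitely many). Consider
\[
\Delta=\T\cup\{\phi\}\cup\{\neg(y_i=y_j) : i<j\}.
\]
A finite subset of $\Delta$ mentions only $y_1,\ldots,y_n$ for some $n$. Choose $\A_m$ with $m\geq n$ as above, and reassign the values of $y_1,\ldots,y_n$ to $n$ distinct elements of $\dom{\A_m}$. This does not change the truth of $\phi$, since the $y_i$ are fresh, nor of the sentences in $\T$. Hence every finite subset of $\Delta$ is satisfiable, and by compactness $\Delta$ is satisfied by some interpretation $\B$.

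Then $\B$ is a $\T$-interpretation satisfying $\phi$ with infinite domain, because the $y_i$ take pairwise distinct values. The signature is countable, so by \Cref{LowenheimSkolem} (applied to $\Delta$, with $\kappa=|\dom{\B}|$) there is an interpretation of size $\aleph_0$ satisfying $\Delta$, and in particular $\T\cup\{\phi\}$. Thus $\aleph_0\in\spec_{\T}(\phi)$, which proves the contrapositive.

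The only mildly delicate point is the variable bookkeeping. We need countably many fresh variables distinct from the free variables of $\phi$, and we must check that reassigning them preserves $\A_m\vDash\phi$. Alternatively, one could use the sentences $\psi_{\geq n}$ from \Cref{card-formulas} in place of the $y_i$, which avoids the reassignment entirely; I would likely present that version.
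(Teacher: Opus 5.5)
Your argument is correct and is essentially the standard proof the paper has in mind. You show that $\T\cup\{\phi\}$ stays finitely satisfiable when arbitrarily many pairwise distinct elements are required, use compactness to get an infinite $\T$-interpretation satisfying $\phi$, and then apply \Cref{LowenheimSkolem} to obtain one of size $\aleph_0$. The $\psi_{\geq n}$ variant you mention is the cleaner presentation and goes through the same way.
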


\subsection{Theory combination and its Galois connection}
\label{sec:theorycombprops}

$\T$ is said to have \textbf{computable finite spectra}\footnote{Papers~\cite{CADE,FroCoS,LPAR,CADE25,POPL} include many examples for all properties discussed here.} \cite{CADE25}
when there is an algorithm that takes a quantifier-free formula $\phi$ and an $n\in\No$, returning whether $n\in\spec_{\T}(\phi)$.
A theory $\T$ is \textbf{stably infinite} \cite{NelsonOppen} if $\spec_{\T}(\phi)\neq\emptyset$, for a quantifier-free $\phi$, implies $\aleph_{0}\in\spec_{\T}(\phi)$.
A theory $\T$ is \textbf{smooth} when, for every quantifier-free $\phi$, $\T$-interpretation $\A\vDash\phi$ and $\kappa\geq|\dom{\A}|$, there is a $\T$-interpretation $\B\vDash\phi$ with $|\dom{\B}|=\kappa$.
$\T$ has the \textbf{finite model property} when, for every $\T$-satisfiable quantifier-free formula $\phi$, $\spec_{\T}(\phi)\cap\No\neq \emptyset$.
And, if $\T$ is decidable, its \textbf{minimal model function} is the function $\minmod_{\T}$ that takes a quantifier-free formula $\phi$ and returns $\min\spec_{\T}(\phi)$, if that set is not empty (and any value in $\N$ otherwise).
A theory is then said to be \textbf{shiny} \cite{shiny} when it is smooth, has the finite model property, and its minimal model function is computable.

\begin{proposition}\label{prop:alternative-definition-shininess}
    A theory is shiny if and only if there is an algorithm that takes a quantifier-free formula $\phi$ and returns an $n\in\mathbb{N}$ such that $\spec_{\T}(\phi)=\emptyset$ if $n=0$, or $\spec_{\T}(\phi)=[n,\aleph_{0}]$ otherwise.
\end{proposition}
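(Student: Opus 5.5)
We prove the two directions separately. Throughout, note that the minimal model function is only defined for decidable theories, so shininess includes decidability of $\T$. Also, since $0\notin\spec_{\T}(\phi)$ (domains are non-empty), the value $n=0$ is free to encode emptiness of the spectrum.

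\emph{Left to right.} Assume $\T$ is shiny. Given a quantifier-free $\phi$, we first run the decision procedure for $\T$. If $\phi$ is not $\T$-satisfiable, then $\spec_{\T}(\phi)=\emptyset$ and we return $0$. Otherwise, by the finite model property $\spec_{\T}(\phi)\cap\No\neq\emptyset$, so $m=\minmod_{\T}(\phi)$ is a positive natural number, and by computability of the minimal model function we can compute it and return $n=m$. It remains to check $\spec_{\T}(\phi)=[m,\aleph_{0}]$. The inclusion $\subseteq$ holds because $m$ is the minimum of the spectrum. For $\supseteq$, take a $\T$-interpretation $\A\vDash\phi$ with $|\dom{\A}|=m$ and apply smoothness with any $\kappa\in[m,\aleph_0]$.

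\emph{Right to left.} Let $\mathcal{P}$ be the assumed algorithm. Decidability is immediate: $\phi$ is $\T$-satisfiable iff $\mathcal{P}(\phi)\neq0$. The finite model property also follows at once: if $\phi$ is satisfiable then $\spec_{\T}(\phi)=[n,\aleph_0]$ with $n\in\No$, and this set contains $n$. The minimal model function is computed by $\mathcal{P}$ itself, since $\min[n,\aleph_0]=n$, and any value is allowed when the spectrum is empty.

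The main obstacle is smoothness, because its definition quantifies over all cardinals $\kappa\geq|\dom{\A}|$, while the spectrum only records sizes up to $\aleph_0$. Take a $\T$-interpretation $\A\vDash\phi$ and $\kappa\geq|\dom{\A}|$.
\begin{itemize}
\item If $\kappa$ is finite, then $|\dom{\A}|\in\spec_{\T}(\phi)=[n,\aleph_0]$, so $n\leq|\dom{\A}|\leq\kappa$. Hence $\kappa\in[n,\aleph_0]$, which gives the required $\B$.
\item If $\kappa$ is infinite, we have $\aleph_0\in\spec_{\T}(\phi)$. We apply \Cref{LowenheimSkolem} to $\Delta=\T\cup\{\phi\}$, treating the free variables of $\phi$ as fresh constants so that the signature stays countable. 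This yields a model of size $\kappa$.
\end{itemize}
Both directions are routine, and the only care needed is this cardinality argument for infinite $\kappa$.
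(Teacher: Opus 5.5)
Your proof is correct and follows the same route as the paper. In one direction you use decidability, the finite model property and the computable minimal model function to produce $n$, then smoothness to fill in $[n,\aleph_0]$. In the other you read decidability, the finite model property, the minimal model function and smoothness directly off the algorithm. Your finite-versus-infinite $\kappa$ split, with \Cref{LowenheimSkolem} also tacitly guaranteeing that a satisfiable $\phi$ has nonempty spectrum, just spells out what the paper compresses into a single appeal to L\"owenheim--Skolem.
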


\begin{proof}
    Suppose $\T$ is shiny:
    if $\phi$ is not $\T$-satisfiable we return $0$;
    if it is, we make $n=\minmod_{\T}(\phi)$.
    This value is in $\mathbb{N}$ because $\T$ has the finite model property;
    and $\spec_{\T}(\phi)=[n,\aleph_{0}]$ because $\T$ is smooth.
    For the converse, we make $\minmod_{\T}(\phi)=n$;
    $\T$ has the finite model property because this value is always in $\mathbb{N}$;
    and it is smooth because $\spec_{\T}(\phi)=[n,\aleph_{0}]$ (\Cref{LowenheimSkolem}).
\end{proof}

%
A theory $\T$ is \textbf{gentle} (\cite{gentle}, though we use the equivalent definition of \cite{CADE25}) when there is an algorithm that takes a quantifier-free formula $\phi$ and returns a pair $(S,b)$, where $S\subset\No$ is finite
and $b$ is a Boolean, such that:
if $b$ is true, $\spec_{\T}(\phi)=S$;
and if $b$ is false, $\spec_{\T}(\phi)=\N\setminus S$.
Using the notations for sets\footnote{These, in principle, would constitute  proper classes. But, we can select
all symbols for our signatures from a countable pool of  symbols to ensure these are sets.}  of theories found in \Cref{tab:notation-prop-th-def}(where, essentially, $\class{X}$ means the set of decidable theories with property $X$) we present the following proposition, whose proof can be found in Proposition 6.1 of \cite{POPL}.
It says that for decidable theories,
shininess implies gentleness, that in turn
implies having computable finite spectra.

\begin{table}[t]
\centering
\renewcommand{\arraystretch}{0.95}
\setlength{\tabcolsep}{6pt}

\begin{tabular}{@{}c p{5.5cm} c@{}}
\hline
\textbf{Symbol} & \textbf{Description (decidable and \ldots)} & \textbf{Definition} \\
\hline
$\class{}$
& --- 
& \multirow{5}{*}{\centering\Cref{sec:prem}} \\
$\class{shiny}$ 
& Shiny 
& \\
$\class{\G}$ 
& Gentle 
& \\
$\class{SI}$ & Stably infinite &\\
$\class{CFS}$ 
& Computable finite spectra 
& \\
\hline
$\class{S-d}$ & $S$-decidable & \multirow{2}{*}{\Cref{sec:oldmethod}}\\
$\class{S-s}$ & $S$-shiny & \\
\hline
$\class{S-r}$ & $S$-restricted & \multirow{2}{*}{\Cref{sec:newmethod}}\\
$\class{S-f}$ & $S$-final & \\
\hline
\end{tabular}
\vspace{2mm}
\caption{Notations for sets of theories.}
\label{tab:notation-prop-th-def}
\end{table}

\begin{proposition}\label{prop:known-relations-properties}
$\class{shiny}\subseteq\class{gentle}\subseteq\class{CFS}\subseteq\class{}$.
\end{proposition}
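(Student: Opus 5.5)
We prove the three inclusions in turn; all three sets consist of decidable theories, so only the distinguishing properties need to be checked. The inclusion $\class{CFS}\subseteq\class{}$ is immediate from the definitions, since $\class{CFS}$ consists of decidable theories that additionally have computable finite spectra.

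For $\class{shiny}\subseteq\class{gentle}$, we use \Cref{prop:alternative-definition-shininess}. Given a quantifier-free $\phi$, we run the algorithm it provides and obtain $n\in\mathbb{N}$. If $n=0$, then $\spec_{\T}(\phi)=\emptyset$, and we output $(\emptyset,\text{true})$. Otherwise $\spec_{\T}(\phi)=[n,\aleph_0]$, and we output $(S,\text{false})$ with $S=[1,n-1]$, which is a finite subset of $\No$; indeed $\N\setminus S=[n,\aleph_0]$. One detail must be settled here: whether $0\in\N$, since domains are non-empty and so $0$ never lies in a spectrum. If $0\in\N$, we should take $S=[0,n-1]$, but $S$ is required to be a subset of $\No$. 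We therefore read the complement $\N\setminus S$ relative to the possible cardinalities, that is, relative to $\No_{\omega}$, consistently with the gentle definition. This yields a gentle algorithm.

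For $\class{gentle}\subseteq\class{CFS}$, suppose we are given $\phi$ and $n\in\No$. We run the gentle algorithm on $\phi$ to get $(S,b)$, where $S$ is finite and explicitly given, so membership of $n$ in $S$ is decidable. If $b$ is true, then $n\in\spec_{\T}(\phi)$ iff $n\in S$. If $b$ is false, then $n\in\spec_{\T}(\phi)$ iff $n\notin S$. This decides finite spectra. Decidability of a gentle theory is already built into membership in $\class{gentle}$. Even without that, it would follow directly: $\phi$ is satisfiable iff $b$ is false (the complement of a finite set contains $\aleph_0$) or $S\neq\emptyset$.

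The main obstacle is only the bookkeeping around $0$ and $\aleph_0$ in the shiny-to-gentle step. Everything else is a direct unfolding of definitions.
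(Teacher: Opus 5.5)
Your proof is correct. Each inclusion is a direct unfolding of the definitions. You are also right to read the complement in the definition of gentleness relative to $\No_{\omega}$ rather than $\N$, which contains $0$: under the literal reading no formula could return $b$ false, and the inclusion $\class{shiny}\subseteq\class{gentle}$ would fail. The paper treats this point the same way implicitly in the proof of \Cref{propabout-S-n-d}. The paper gives no argument of its own here and simply defers to Proposition~6.1 of \cite{POPL}, so your self-contained verification supplies the expected definitional argument explicitly.
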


One important tool in what is to come is the following  result from \cite{gentle,POPL}  .

\begin{lemma}[Fontaine's lemma]\label{lem-fontaine}
    $\T_1 \sqcup \T_2$ is decidable iff the following problem is decidable: given conjunctions of literals $\varphi_1$ and $\varphi_2$, determine whether $\spec_{\T_1}(\varphi_1) \cap \spec_{\T_2}(\varphi_2) = \emptyset$.
\end{lemma}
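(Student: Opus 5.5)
The plan is to prove both directions separately, reducing satisfiability of quantifier-free $\Sigma_1\sqcup\Sigma_2$-formulas to the spectral disjointness problem. The bridge is the standard purification and arrangement argument, combined with the observation that two interpretations with the same cardinality can be glued along a bijection of their domains.

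First, I would reduce $\T_1\sqcup\T_2$-satisfiability of an arbitrary quantifier-free formula to that of conjunctions of literals. This uses disjunctive normal form: the formula is satisfiable iff one of its disjuncts is. Given a conjunction of literals $\varphi$, I would purify it by introducing fresh variables for alien subterms, obtaining an equisatisfiable $\varphi_1\wedge\varphi_2$ with $\varphi_i$ a $\Sigma_i$-conjunction. Let $V$ be the finite set of shared variables. Then $\varphi$ is $\T_1\sqcup\T_2$-satisfiable iff there is an arrangement $\delta$ of $V$ (an equivalence relation, expressed as equalities and disequalities) such that
\[
\spec_{\T_1}(\varphi_1\wedge\delta)\cap\spec_{\T_2}(\varphi_2\wedge\delta)\neq\emptyset.
\]
The key step is the gluing argument. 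If $\A_1\vDash\varphi_1\wedge\delta$ and $\A_2\vDash\varphi_2\wedge\delta$ have the same cardinality $\kappa$, then both realize the same equivalence classes on $V$. Hence there is a bijection $h:\dom{\A_1}\to\dom{\A_2}$ with $h(x^{\A_1})=x^{\A_2}$ for $x\in V$; it exists because the sets of remaining elements have equal cardinality, infinite or finite, and $\kappa$ is the same. Transporting the $\Sigma_2$-structure through $h$ gives a $\T_1\sqcup\T_2$-interpretation. Conversely, any model of $\varphi$ yields such an arrangement and a common cardinality. By \Cref{LowenheimSkolem} it suffices to consider cardinalities in $\N$, matching the definition of spectrum. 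Since there are finitely many arrangements, the right-to-left direction follows: decide each disjointness query.

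For the left-to-right direction, I would use the decision procedure for $\T_1\sqcup\T_2$. Given $\varphi_1,\varphi_2$, rename variables so that they share none. Their spectra intersect iff there are models of equal cardinality. After renaming there are no shared variables, so by the gluing argument with $V=\emptyset$ this is exactly the $\T_1\sqcup\T_2$-satisfiability of $\varphi_1\wedge\varphi_2$. Renaming does not change spectra.

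The main obstacle is the careful cardinality handling in the gluing step. We must ensure the bijection respects the shared variables for every cardinality, including $\aleph_0$. We must also note that uncountable models may be replaced by countable ones via \Cref{LowenheimSkolem}, applied to the complete diagram of the relevant formula together with the axioms.
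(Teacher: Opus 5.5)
Your proof is correct. The paper does not prove this lemma itself; it imports it from \cite{gentle,POPL}. Your argument is the standard proof: for one direction, rename the formulas apart and glue equal-cardinality models along an arbitrary bijection; for the other, use DNF, purification, guessing an arrangement $\delta$ of the shared variables, and gluing along a bijection that fixes them, with \Cref{LowenheimSkolem} keeping all cardinalities in $\N$.

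The only thing to tidy up is the phrase ``complete diagram''. It suffices to apply \Cref{LowenheimSkolem} to $\T_1\cup\T_2\cup\{\varphi_1\wedge\varphi_2\}$, which is a set of formulas over the countable signature $\Sigma_1\sqcup\Sigma_2$, and then read the arrangement off the resulting countable model. A diagram naming every element of an uncountable model would not be over a countable signature, so the lemma would not apply to it.
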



Two theories are \textbf{combinable} if their disjoint union is decidable, and two sets of theories are combinable if any theory in the first is combinable with any theory in the second;
if one of these sets is a singleton, we may simply say the theory it contains is combinable with the other set.
For each $X\subseteq\class{}$, let $G(X)$ 
be the 
set of all decidable theories that are combinable with $X$, that is, the largest set of decidable theories that is combinable with $X$.

If $X\subseteq Y$, as every theory combinable with $Y$ is also combinable with $X$, we get $G(Y)\subseteq G(X)$;
and $X$ is combinable with $Y$ iff $Y$ is combinable with $X$, meaning 
$X\subseteq G(Y)$ iff
$Y\subseteq G(X)$.
Both these facts
 make $G$ an \textbf{antitone Galois connection} with respect to $\subseteq$.
A \textbf{combination theorem} typically says two sets of theories are combinable, meaning $X\subseteq G(Y)$ and $Y\subseteq G(X)$.
Such a theorem is {\em sharp} if, in fact,
$X=G(Y)$ and $Y=G(X)$.
In this case $G(G(X))=X$ and $G(G(Y))=Y$, meaning, in lattice-theoretic terms, that $X$ and $Y$ are \textbf{closed};
for an arbitrary $X$, $G(G(X))$ is its \textbf{closure}, the least closed set containing $X$.

\section{First Generalization}\label{sec:oldmethod}
We now provide a new combination method, which
generalizes shiny combination, and requires the following properties, that parameterize shininess and decidability by a computable set $S$.

\begin{definition}
\label{def:sshiny-sdec}
    Take a computable set $S\subseteq\No$.
    \begin{enumerate}
        \item A decidable theory $\T$ is $S$-decidable when there is an algorithm that takes a quantifier-free formula $\phi$ and an $n\in \No$ and returns:
        whether $n\in\spec_{\T}(\phi)$ if $n\in S$;
        and an arbitrary Boolean value otherwise.
        \item A decidable theory $\T$ is $S$-shiny if there is an algorithm that takes a quantifier-free formula $\phi$ and returns a pair $(n,S_{0})$, where $n\in\mathbb{N}$ and $S_{0}$ is a finite subset of $S$ (possibly empty), such that:
        if $n=0$, $\spec_{\T}(\phi)=S_{0}$;
        if $n\neq 0$, $\spec_{\T}(\phi)=S_{0}\cup[n,\aleph_{0}]$.
    \end{enumerate}
\end{definition}

\begin{remark}
$n$-decidability and $n$-shininess were defined in \cite{POPL} for any $n\in\No$.
The former corresponds to $\{n\}$-decidability, the latter to $\{n\}$-shininess.
\end{remark}

Now, it was shown in \cite{CADE25} that every shiny theory is decidable. This also holds for the new generalizations. 

\begin{proposition}
    An $S$-shiny theory is $S$-decidable.
\end{proposition}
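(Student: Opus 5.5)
The plan is to exhibit the required algorithm directly from the $S$-shininess algorithm. Decidability is already built into the definition of $S$-shiny (it is a decidable theory by assumption), so only the extra clause of $S$-decidability must be supplied: given a quantifier-free $\phi$ and $m\in\No$, decide whether $m\in\spec_{\T}(\phi)$ whenever $m\in S$.

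First, run the $S$-shininess algorithm on $\phi$ to obtain the pair $(n,S_{0})$, where $S_{0}$ is a finite subset of $S$ given explicitly. Then split into two cases.
\begin{enumerate}
\item If $n=0$, then $\spec_{\T}(\phi)=S_{0}$, so we return whether $m\in S_{0}$. This is a finite membership check.
\item If $n\neq 0$, then $\spec_{\T}(\phi)=S_{0}\cup[n,\aleph_{0}]$, so we return true iff $m\in S_{0}$ or $n\leq m$. Both tests are trivially computable since $m$ and $n$ are natural numbers.
\end{enumerate}
In both cases the returned Boolean is exactly the truth value of $m\in\spec_{\T}(\phi)$. So the procedure is correct for every $m$, and in particular for every $m\in S$. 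For $m\notin S$ any answer is allowed, and ours happens to be correct anyway. Notably, we never need to test membership in $S$ itself, although $S$ is computable in any case.

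I expect no real obstacle here; the argument is a direct reading of the definitions, in the same spirit as the proof of \Cref{prop:alternative-definition-shininess}. The only points to state carefully are that the output $S_{0}$ is a finite, explicitly given set, so membership in it is decidable, and that decidability of $\T$ is part of the hypothesis rather than something to prove. If one did not assume decidability, it would still follow: $\phi$ is $\T$-satisfiable iff $n\neq 0$ or $S_{0}\neq\emptyset$. So the same algorithm also yields decidability, matching the remark that shiny theories are decidable \cite{CADE25}.
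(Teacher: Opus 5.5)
Your proposal is correct and is essentially the paper's proof: run the $S$-shininess algorithm to get $(n,S_0)$, then answer true iff $m\in S_0$, or $n\neq 0$ and $m\geq n$. Your extra remark, that satisfiability can be read off as ``$n\neq 0$ or $S_0\neq\emptyset$'', is also correct, though not needed here.
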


\begin{proof}
    Suppose $\T$ is $S$-shiny, and that we are given a quantifier-free $\phi$ and an $m\in S$:
    we want to return whether $m\in\spec_{\T}(\phi)$.
    As $\T$ is $S$-shiny, there is an algorithm that takes $\phi$ and returns a pair $(n,S_{0})$ such that $S_{0}\subseteq S$ is finite, if $n=0$ then $\spec_{\T}(\phi)=S_{0}$, and otherwise $\spec_{\T}(\phi)=S_{0}\cup [n,\aleph_{0}]$.
    If $n=0$, $m\in\spec_{\T}(\phi)$ iff $m\in S_{0}$;
    otherwise $m\in\spec_{\T}(\phi)$ iff $m\in S_{0}$ or $m\geq n$.
    This procedure may be seen in \Cref{alg:S-shiny-implies-S-decidable}.
\end{proof}

\begin{figure}[t]
           \begin{algorithm}[H]
    \begin{algorithmic}[1]
    \Function{SD}{$\phi$,$m$}
    \State{$(n,S_{0})\gets$ computed for $\T$ and $\phi$, according to \Cref{def:sshiny-sdec}}
    \If{$m\in S_{0}$}
        \Return true
        \Else 
            \If{$n>0$}
                \If{$m\geq n$} \Return true
                \EndIf
            \EndIf
            \State{\Return false}
    \EndIf
    \EndFunction 
\end{algorithmic}
    \caption{Pseudocode for showing $\T$ is $S$-decidable if it is $S$-shiny.} \label{alg:S-shiny-implies-S-decidable}
\end{algorithm}
\end{figure}

First, we show that these notions are indeed
generalizations of decidable and shiny theories,
in particular when $S$ is taken to be the empty set.
Additionally, we show that when $S$ is taken to be the entire set $\No$,
we get computable finite spectra and gentleness.
This shows gentle and shiny theory combination are two ends of a spectrum:
gentleness is the weakest form of $S$-shininess, and shininess the strongest;
reciprocally, having computable finite spectra is the strongest form of $S$-decidability, and decidability itself the weakest.

\begin{restatable}{proposition}{propaboutSnd}
\label{propabout-S-n-d}
    $\class{$\emptyset$-d}=\class{}$, $\class{$\No$-d}=\class{CFS}$, $\class{$\No$-s}=\class{\G}$, and $\class{$\emptyset$-s}=\class{shiny}$.
\end{restatable}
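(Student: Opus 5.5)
We prove the four equalities separately, each by unfolding \Cref{def:sshiny-sdec} and comparing with the definitions in \Cref{sec:prem}. Throughout, membership in any $\class{X}$ already includes decidability, so only the extra algorithmic condition needs to be matched.

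For $\class{$\emptyset$-d}=\class{}$: with $S=\emptyset$ no $n\in\No$ lies in $S$, so the required algorithm may return an arbitrary Boolean (say, always false). Hence every decidable theory is $\emptyset$-decidable, and the reverse inclusion holds by definition. For $\class{$\No$-d}=\class{CFS}$: when $S=\No$, the algorithm must decide $n\in\spec_{\T}(\phi)$ for every $n\in\No$, which is literally the definition of computable finite spectra (restricted to decidable theories).

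For $\class{$\emptyset$-s}=\class{shiny}$: with $S=\emptyset$ we always have $S_0=\emptyset$. The output is then just $n$, with $\spec_{\T}(\phi)=\emptyset$ if $n=0$ and $[n,\aleph_0]$ otherwise. This is exactly the characterization in \Cref{prop:alternative-definition-shininess}; the algorithm there returns $n$, and we pair it with $\emptyset$. Shiny theories are decidable by the cited result of \cite{CADE25}, so both sets consist of decidable theories and the equality follows.

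The main work is $\class{$\No$-s}=\class{\G}$. For $\subseteq$: given $(n,S_0)$, if $n=0$ return $(S_0,\text{true})$. Otherwise return $(S,\text{false})$ with $S=[1,n-1]\setminus S_0$, a finite subset of $\No$; then $\N\setminus S=S_0\cup[n,\aleph_0]$ (here $0\notin\N$ matters only insofar as spectra lie in $\N\setminus\{0\}$, since domains are non-empty, so one should read $\N\setminus S$ as relative to $\No\cup\{\aleph_0\}$, as in \cite{CADE25}). For $\supseteq$: given a gentle output $(S',b)$, if $b$ is true return $(0,S')$. If $b$ is false, the spectrum is the complement of a finite set $S'$. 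Let $n=\max S'+1$ (or $n=1$ if $S'=\emptyset$) and $S_0=[1,n-1]\setminus S'$; then $\spec_{\T}(\phi)=S_0\cup[n,\aleph_0]$ and $n\neq0$. Decidability of gentle theories follows from \Cref{prop:known-relations-properties}. The only delicate point, and the one I expect to require care, is the edge case where the spectrum is empty or where $\aleph_0$ must be included or excluded consistently. Finite spectra given by $b$ true exclude $\aleph_0$, matching the $n=0$ case, and cofinite ones include it, matching $n\neq0$. So the correspondence is exact and no appeal to \Cref{compactness2} is needed.
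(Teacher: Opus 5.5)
Your proposal is correct and follows the paper's proof essentially step for step. The first two equalities and the $\emptyset$-shiny case are obtained by unfolding definitions (the last via \Cref{prop:alternative-definition-shininess}), and for gentleness you use the same conversion of a cofinite answer $(S',\mathrm{false})$ into $(\max S'+1,\,[1,\max S']\setminus S')$. Beyond this you write out the direction the paper calls clear, treat $S'=\emptyset$, and read the complement in the definition of gentleness relative to $\mathbb{N}^{*}\cup\{\aleph_0\}$; these details are left implicit in the paper. One small slip: $0$ does lie in $\mathbb{N}_{\omega}$, so your parenthetical should say that, though the reading you adopt is the right one.
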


\begin{proof}
    $\class{$\emptyset$-d}=\class{}$ should be clear, as no condition is imposed on its theories other than being decidable.
    If a theory $\T$ is in $\class{$\No$-d}$ then there is an algorithm that takes a quantifier-free formula $\phi$ and an $n\in\No$ and returns whether $n\in\spec_{\T}(\phi)$, what corresponds precisely to the definition of having computable finite spectra.

    That $\class{$\emptyset$-s}=\class{shiny}$ follows from the fact that in a $\emptyset$-shiny theory $\T$ there is an algorithm that takes a quantifier-free formula $\phi$ and returns a pair $(n,\emptyset)$ such that: 
    either $n=0$ and $\spec_{\T}(\phi)=\emptyset$ (meaning $\phi$ is not $\T$-satisfiable);
    or $n\in\No$ and $\spec_{\T}(\phi)=[n,\aleph_{0}]$, as expected of a shiny theory by \Cref{prop:alternative-definition-shininess}.

    Finally, it is clear $\No$-shiny theories are gentle.
    And if a theory $\T$ is gentle we either have $\spec_{\T}(\phi)$ finite, when we output $(0,\spec_{\T}(\phi))$, which is compatible with being an $\No$-shiny theory as $\spec_{\T}(\phi)\subseteq\mathbb{N}$;
    or $\spec_{\T}(\phi)$ is cofinite, and from gentleness we get $S=\N\setminus\spec_{\T}(\phi)$.
    In this case take the greatest element $n$ not in $\spec_{\T}(\phi)$, and we have $\spec_{\T}(\phi)$ equals the union of $[1,n]\setminus S$ and $[n+1,\aleph_{0}]$,
    also compatible with $\T$ being $\mathbb{N}$-shiny as certainly $[1,n]\setminus S\subseteq \mathbb{N}$, when we output $(n+1,[1,n]\setminus S)$.  
\end{proof}

\begin{figure}[t]
\begin{algorithm}[H]
    \begin{algorithmic}[1]
        \Function{SD+SS}{$\phi_1, \phi_2$}
        \If{$\phi_2$ is not $\T_2$-satisfiable}
              \Return false
            \EndIf
          \State{$(n,S_{0})\gets$ computed for $\phi_2$ and $\T_2$, according to \Cref{def:sshiny-sdec}.}
            \If{$n\neq 0$ and $\phi_1\wedge \neq(x_{1},\ldots,x_{n})$ is $\T_1$-satisfiable}
                \Comment{$x_{1},\ldots,x_{n}$ are fresh}
                \State \Return true
            \EndIf
            \For{$m\in S_{0}$} 
                \If{$m\in\spec_{\T_1}(\phi_1)$}
                    \State \Return true
                \EndIf
            \EndFor
            \State \Return false
        \EndFunction
\end{algorithmic}
\caption{Pseudocode for determining non-emptyness of $\spec_{\T_1}(\phi_1)\cap\spec_{\T_2}(\phi_2)$ when $\T_1$ is $S$-decidable and $\T_2$ is $S$-shiny.} \label{alg-sdec-sshiny}
\end{algorithm}
\end{figure}

The following theorem is the main result of this section.
It generalizes the shiny combination theorem,
according to which the disjoint combination of a shiny theory with a decidable theory is itself decidable~\cite{shiny}.

\begin{restatable}{theorem}{IpartSsSd}\label{theo:Ipart-Ss-Sd}
If $\T_{1}$ is $S$-decidable and $\T_2$ is $S$-shiny for some $S$, and both $\T_1$ and $\T_2$ are decidable, then $\T_1 \sqcup\T_2$ is decidable.   
\end{restatable}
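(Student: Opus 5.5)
By Fontaine's lemma (\Cref{lem-fontaine}), it suffices to give an algorithm that, on conjunctions of literals $\phi_1$ over $\Sigma_1$ and $\phi_2$ over $\Sigma_2$, decides whether $\spec_{\T_1}(\phi_1)\cap\spec_{\T_2}(\phi_2)\neq\emptyset$. The plan is to verify that the procedure of \Cref{alg-sdec-sshiny} is correct and terminates.

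\textbf{The procedure.} We first use decidability of $\T_2$ to check whether $\phi_2$ is $\T_2$-satisfiable. If it is not, $\spec_{\T_2}(\phi_2)=\emptyset$ and we answer false. Otherwise we run the $S$-shiny algorithm on $\phi_2$ and obtain $(n,S_0)$. By \Cref{def:sshiny-sdec}, either $\spec_{\T_2}(\phi_2)=S_0$ (when $n=0$) or $\spec_{\T_2}(\phi_2)=S_0\cup[n,\aleph_0]$ (when $n\neq 0$). The intersection is therefore nonempty iff one of the following holds:
\begin{enumerate}
\item $n\neq 0$ and $\spec_{\T_1}(\phi_1)\cap[n,\aleph_0]\neq\emptyset$;
\item some $m\in S_0$ lies in $\spec_{\T_1}(\phi_1)$.
\end{enumerate}

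\textbf{Deciding (ii).} Since $S_0$ is a finite subset of $S$, this condition is decidable. For each $m\in S_0$ we query the $S$-decidability algorithm of $\T_1$ on $(\phi_1,m)$. Because $m\in S$, the answer is guaranteed correct, and only finitely many queries are made.

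\textbf{Deciding (i).} This is the main step. I claim that $\spec_{\T_1}(\phi_1)\cap[n,\aleph_0]\neq\emptyset$ iff $\phi_1\wedge\neq(x_1,\ldots,x_n)$ is $\T_1$-satisfiable, with $x_1,\ldots,x_n$ fresh. The latter is a quantifier-free formula, so decidability of $\T_1$ answers it.
\begin{itemize}
\item For the forward direction, take a $\T_1$-interpretation $\A\vDash\phi_1$ whose domain has size $k\geq n$, where $k$ may be $\aleph_0$. Reassign the fresh variables $x_1,\ldots,x_n$ to $n$ distinct elements of $\dom{\A}$. This does not affect the truth of $\phi_1$, since the $x_i$ do not occur in it.
\item For the converse, any model of $\phi_1\wedge\neq(x_1,\ldots,x_n)$ has at least $n$ elements. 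By \Cref{LowenheimSkolem} we may assume its size lies in $\N$, so it lies in $[n,\aleph_0]$. Hence it witnesses an element of $\spec_{\T_1}(\phi_1)\cap[n,\aleph_0]$.
\end{itemize}

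Putting these together, all branches terminate and return the correct answer, so the problem in Fontaine's lemma is decidable and hence $\T_1\sqcup\T_2$ is decidable. The only place requiring care is the equivalence for (i): it shows that the upward-closed part of $\T_2$'s spectrum needs nothing from $\T_1$ beyond plain decidability, while the finite exceptional part $S_0$ is exactly what $S$-decidability is designed to handle.
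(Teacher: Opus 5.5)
Your proposal is correct and follows the paper's proof. Both reduce via Fontaine's lemma to \Cref{alg-sdec-sshiny}, handle the tail $[n,\aleph_0]$ of $\T_2$'s spectrum by testing $\phi_1\wedge\neq(x_{1},\ldots,x_{n})$ using only decidability of $\T_1$, and handle the finite part $S_0\subseteq S$ with finitely many $S$-decidability queries. Your explicit two-direction argument for the tail equivalence, including the appeal to \Cref{LowenheimSkolem} to obtain a size in $\N$, spells out what the paper leaves implicit.
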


\begin{proof}
Using \Cref{lem-fontaine},
it suffices to show that
it is then decidable whether $\spec_{\T_{1}}(\phi_{1})\cap\spec_{\T_{2}}(\phi_{2})\neq\emptyset$, where $\phi_{1}$ and $\phi_{2}$ are conjunctions of literals over the signatures $\Sigma_{1}$ and $\Sigma_{2}$, respectively.
We state \Cref{alg-sdec-sshiny} presents an adequate decision procedure.
        We can compute a pair $(n,S_{0})$, where $S_{0}\subseteq S$ is finite, such that $\spec_{\T_{2}}(\phi_{2})$ equals $S_{0}\cup [n,\aleph_{0}]$ if $n\neq 0$, and $S_{0}$ otherwise, from the fact that $\T_{2}$ is $S$-shiny.
    We can test whether $\phi_{1}\wedge\neq(x_{1},\ldots,x_{n})$ is $\T_{1}$-satisfiable from the fact that $\T_{1}$ is decidable.
    Finally, we can test if an $m\in S_{0}$ is also in $\spec_{\T_{1}}(\phi_{1})$ since $\T_{1}$ is $S$-decidable, and $S_{0}\subseteq S$.

    Next we prove correctness.\\
    ($\Rightarrow$):
    Suppose $\spec_{\T_{1}}(\phi_{1})\cap\spec_{\T_{2}}(\phi_{2})$ is not empty.
    If $n\neq 0$ and they share an element in $[n,\aleph_{0}]$ then line $4$ of our algorithm will return true.
    If not they must share an element in $S_{0}$ and the loop starting on line $5$ will  return true.\\
    ($\Leftarrow$):
    Suppose now $\spec_{\T_{1}}(\phi_{1})\cap\spec_{\T_{2}}(\phi_{2})$ is empty.
    The conditional in line $3$ won't be satisfied, as otherwise the intersection would not be empty;
    and neither will be any conditional in the loop starting at line $5$, so eventually, we will return false.
    \end{proof}

We now study how  $S$-decidability and $S$-shininess change with $S$.


\begin{restatable}{proposition}{incomparableSsSd}
\label{prop:incomparable-Ss-Sd}
~
    \begin{enumerate}
        \item If $S\subsetneq T$, then $\class{T-d}\subsetneq\class{S-d}$ and $\class{S-s}\subsetneq \class{T-s}$.
        \item If $S$ and $T$ are incomparable (i.e. $S\not\subseteq T$ and $T\not\subseteq S$), so are $\class{S-d}$ and $\class{T-d}$, and  also $\class{S-d}$ and $\class{T-d}$.
    \end{enumerate}
\end{restatable}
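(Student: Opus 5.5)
The inclusions come straight from the definitions; the strictness and incomparability claims need explicit separating theories, whose spectra I control through a computability (halting) construction. Note that the second part presumably means $\class{S-d}$ versus $\class{T-d}$ and also $\class{S-s}$ versus $\class{T-s}$; I treat both.

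\emph{Inclusions.} If $S\subseteq T$ and $\T$ is $T$-decidable, the same algorithm witnesses $S$-decidability, since on $n\in S\subseteq T$ it answers correctly. Hence $\class{T-d}\subseteq\class{S-d}$. If $\T$ is $S$-shiny, every output pair $(n,S_0)$ has $S_0\subseteq S\subseteq T$, so $\T$ is $T$-shiny. Hence $\class{S-s}\subseteq\class{T-s}$.

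\emph{Separating theories.} Fix $m\in S\setminus T$, or $m\in T\setminus S$ in the strict case. I use the empty signature, or one with countably many nullary predicates $P_1,P_2,\ldots$ to encode a non-computable set. Enumerate Turing machines $M_1,M_2,\ldots$ and consider the theory $\T_m$ with the following axioms.
\begin{itemize}
\item For each $i$: $P_i\to\psi_{\geq m}$.
\item For each $i$ such that $M_i$ does \emph{not} halt: $P_i\to\neg\psi_{=m}$.
\end{itemize}
This is an infinite recursive-looking scheme, but it need not be recursive, since we only care about decidability of its quantifier-free fragment. Then, for a conjunction $\phi$ containing $P_i$, I expect
\[\spec_{\T_m}(\phi)=\begin{cases}[m,\aleph_0] & \text{if } M_i \text{ halts},\\ [m+1,\aleph_0] & \text{otherwise,}\end{cases}\]
intersected with what the equalities in $\phi$ force. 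The properties of $\T_m$ should then be as follows.
\begin{itemize}
\item \emph{Decidable.} Satisfiability is unaffected by the halting data, since large models always exist.
\item \emph{$k$-decidable for every $k\neq m$.} Membership of $k$ is decided without the halting information, which gives $\T_m\in\class{T-d}$ because $m\notin T$.
\item \emph{Not $m$-decidable.} Deciding $m\in\spec(P_i)$ would decide halting, so $\T_m\notin\class{S-d}$ when $m\in S$.
\end{itemize}
For shininess the roles reverse. If $m\in S$, $\T_m$ is $S$-shiny: output $S_0=\{m\}$ and $n=m+1$, which is correct whether or not $M_i$ halts, since $\{m\}\cup[m+1,\aleph_0]=[m,\aleph_0]$. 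That settles the halting-independent cases. On the other hand, $\T_m$ is not $T$-shiny when $m\notin T$: then $m\notin S_0$, so $n$ must equal $m$ or $m+1$ exactly, which again decides halting.

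\emph{Assembling the claims.} For $S\subsetneq T$, pick $m\in T\setminus S$. Then $\T_m\in\class{S-d}\setminus\class{T-d}$ and $\T_m\in\class{T-s}\setminus\class{S-s}$. For incomparable $S$ and $T$, use $m\in S\setminus T$ and $m'\in T\setminus S$ to get witnesses in both directions, for both the $d$ and the $s$ classes.

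\emph{Main obstacle.} The main obstacle is the careful spectrum computation for arbitrary quantifier-free $\phi$, which may mix several $P_i$, their negations, and equalities among variables. The spectrum should be $[\max(m,\text{forced size}),\aleph_0]$, with $m$ itself included iff all positive $P_i$ in $\phi$ have halting machines, and I must confirm this. This computation gives decidability, the $k$-decidability for $k\neq m$, and the $S$-shiny algorithm described above. If negated $P_i$ cause trouble, I would simplify to a single predicate per formula, or use the signature with constants only as in the test theories of~\cite{POPL}.
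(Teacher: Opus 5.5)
\textbf{Gap in the shininess separations.} Your inclusions are fine, and so is the $d$-part. For conjunctions, $\spec_{\T_m}(P_i)$ is $[m,\aleph_0]$ or $[m+1,\aleph_0]$ according to whether $M_i$ halts. Negated $P_j$ cause no trouble: just make every $P_j$ that does not occur positively false. Hence $\T_m$ is decidable, has a computable spectrum at every cardinality $k\neq m$, and is not $S$-decidable when $m\in S$. This is a valid variant of the paper's $\T_s$, which instead allows only models of size $n_s$ or infinite ones. The problem is the shininess half.

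You claim that for $m\in S$ the output $(m+1,\{m\})$ is correct for $P_i$ whether or not $M_i$ halts. But $S$-shininess requires $\spec_{\T}(\phi)=S_0\cup[n,\aleph_0]$ with \emph{equality}. When $M_i$ does not halt, $\spec_{\T_m}(P_i)=[m+1,\aleph_0]$, which does not contain $m$. In fact no output can be right in both cases, because the pair $(n,S_0)$ determines the whole spectrum. So every $S$-shiny theory has computable finite spectra, which is why $\class{S-s}\subseteq\class{gentle}\subseteq\class{CFS}$ and why the paper proves that $S$-shiny implies $S$-decidable. Since membership of $m$ in $\spec_{\T_m}(P_i)$ encodes halting, $\T_m$ lies in no $\class{S-s}$ at all. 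Your own two claims for $m\in S$, namely $\T_m\notin\class{S-d}$ and $\T_m\in\class{S-s}$, already contradict each other. So $\class{S-s}\subsetneq\class{T-s}$ and the incomparability of the $s$-classes remain unproved.

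\textbf{What is needed.} A witness in $\class{S-s}\setminus\class{T-s}$ must have a fully computable spectrum and must fail $T$-shininess for a \emph{structural} reason, not a computational one. The paper uses $\T_{=m}$, axiomatized by $\psi_{=m}$, with $m\in S\setminus T$.
\begin{itemize}
\item Every spectrum is $\{m\}$ or $\emptyset$, decidable via $\minmod_{\T_{Eq}}$. The outputs $(0,\{m\})$ and $(0,\emptyset)$ witness $S$-shininess.
\item $\spec_{\T_{=m}}(x=x)=\{m\}$ does not contain $\aleph_0$ and is not a subset of $T$. So it has neither admissible form, and $\T_{=m}$ is not $T$-shiny.
\end{itemize}
A stably infinite alternative is the theory $\{\psi_{\geq m},\neg\psi_{=m+1}\}$. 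Its tautology spectrum $\{m\}\cup[m+2,\aleph_0]$ cannot be written as $T_0\cup[n,\aleph_0]$ with $T_0\subseteq T$. Indeed, $m\notin T_0$ would force $n\leq m$, and then $m+1$ would lie in the spectrum. With either witness, you need two different theories: $\T_m$ for the $d$-classes and a structural witness for the $s$-classes.
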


\begin{proof}
    Suppose $\T$ is $T$-decidable:
    this means there is an algorithm that takes a quantifier-free $\phi$ and an $n$ and, if $n\in T$, it returns whether $n\in\spec_{\T}(\phi)$;
    since $S\subseteq T$, this same algorithm shows that $\T$ is $S$-decidable.
    For $S$ and $T$-shininess it is enough to notice that if $S_{0}$ is a finite subset of $S$, and $S\subseteq T$, then $S_{0}$ is also a finite subset of $T$.
    
    Let $n_{s}\in S\setminus T$, and $n_{t}\in T\setminus S$.
    Define the $\Spn$-theory $\T_{s}$ with axiomatization
    \[\{\psi_{=n_{s}}\vee\psi_{\geq k}: k\in\No\}\cup\{\psi_{=n_{s}}\rightarrow \neg P_{k} : k\notin U\}\cup\{P_{i}\rightarrow\neg P_{j} : i\neq j\};\]
    we define $\T_{t}$ analogously.
    $\T_{s}$ is decidable:
    given a quantifier-free $\phi$, we may assume it contains at most one positive $P$-literal;
    if $\phi^{\prime}$ is obtained from $\phi$ by removing any $P$-literals, then $\phi$ is $\T_{s}$-satisfiable if and only if $\phi^{\prime}$ is $\T_{Eq}$-satisfiable.
    One direction of this is trivial;
    for the other we can use $\T_{Eq}$ is smooth to find an infinite interpretation that satisfies $\phi^{\prime}$, and we make it into a $\T_{s}$-interpretation that satisfies $\phi$ by making $P_{k}$ true if and only if it occurs positively in $\phi$.
    $\T_{s}$ is also $T$-decidable, as it contains no interpretations with cardinality in $T$;
    yet it is not $S$-decidable, as $n_{s}\in\spec_{\T_{s}}(P_{k})$ if and only if $k\in U$.
    The results for $\T_{t}$ are completely analogous.

    Now consider the theories $\T_{=n_{s}}$ and $\T_{=n_{t}}$, axiomatized by respectively $\{\psi_{=n_{s}}\}$ and $\{\psi_{=n_{t}}\}$, which are decidable (\cite{POPL}).
    $\T_{=n_{s}}$ is $S$-shiny, since $\spec_{\T_{n_{s}}}(\phi)=\{n_{s}\}$ if $\minmod_{\T_{Eq}}(\phi)\leq n_{s}$, and otherwise it is empty:
    for the non-trivial direction, if $\minmod_{\T_{Eq}}(\phi)\leq n_{s}$ we can use the fact $\T_{Eq}$ is smooth to get a $\T_{Eq}$-interpretation that satisfies $\phi$ with $n_{s}$ elements, and of course that will also be a $\T_{=n_{s}}$-interpretation.
    But $\T_{=n_{s}}$ is obviously not $T$-shiny:
    the spectrum of a tautology such as $x=x$ in $\T_{=n_{s}}$ is $\{n_{s}\}$, and this set is not of the form $T_{0}\cup[n,\aleph_{0}]$, for any $n$ and $\T_{0}\subseteq T$, since it does not contain $\aleph_{0}$;
    it is also not of the form $T_{0}$, for a finite $T_{0}\subseteq T$, as $n_{s}\notin T$.
    Analogous results hold for $\T_{=n_{t}}$.

    If $S\subsetneq T$ and $n_{t}\in T\setminus S$, the theories $\T_{t}$ and $\T_{=n_{t}}$ above also show $\class{T-d}\subsetneq\class{S-d}$ and $\class{S-s}\subsetneq\class{T-s}$.
\end{proof}

Thus, the collections of $S$-decidable and $S$-shiny theories, for computable $S\subseteq \No$, have the structure of the computable subsets of $\No$, and its dual.
More precisely, we have \Cref{fig:diagram-Ss-Sd}:
the Galois connection $G$ is simply the reflection over the dashed line.
Below it we find the many variations of $S$-shininess, having the structure of the computable subsets of $\No$, with shininess at the bottom and gentleness on top.
Above the dashed line we find the variations of $S$-decidability, having the structure of the dual of the computable subsets of $\No$, with computable finite spectra at the bottom and decidability at the top.

\begin{figure}[t]
    \centering
    \begin{tikzcd}[sep=0.5mm,scale=0.8]
      \phantom{O}\arrow[bend right=45,swap,red,<->]{dddddddddddddddddd}{G}& & \class{}=\class{$\emptyset$-d}\arrow[dash]{ddd}{}\arrow[dash]{dddr}{}\arrow[dash]{dddrr}{} & & & \\
      & & & & & \\
      & & & & & \\
      & & \class{$\{1\}$-d} & \class{$\{2\}$-d} & \class{$\{3\}$-d} & \cdots\\
      & & \vdots & \vdots & \vdots & \\
      & & \class{$\{1\}^{c}$-d}\arrow[dash]{ddd}{} & \class{$\{2\}^{c}$-d}\arrow[dash]{dddl}{} & \class{$\{3\}^{c}$-d}\arrow[dash]{dddll}{} & \cdots\\
      & & & & & \\
      & & & & & \\
      \phantom{O}\arrow[bend right=90,swap,red,<->]{dd}{G}& & \class{CFS}=\class{$\No$-d}\arrow[dash]{dd} & & & \\
      \phantom{O}\arrow[dashed,dash]{rrrrr}& & & & &\phantom{O} \\
      \phantom{O}& & \class{gentle}=\class{$\No$-s}\arrow[dash]{ddd}{}\arrow[dash]{dddr}{}\arrow[dash]{dddrr}{} & & & \\
      & & & & & \\
      & & & & & \\
      & & \class{$\{1\}^{c}$-s} & \class{$\{2\}^{c}$-s} & \class{$\{3\}^{c}$-s} & \cdots\\
      & & \vdots & \vdots & \vdots & \\
      & & \class{$\{1\}$-s}\arrow[dash]{ddd}{} & \class{$\{2\}$-s}\arrow[dash]{dddl}{} & \class{$\{3\}$-s}\arrow[dash]{dddll}{} & \cdots\\
      & & & & & \\
      & & & & & \\
      \phantom{O}& & \class{shiny}=\class{$\emptyset$-s} & & & \\
    \end{tikzcd}
    \caption{The rich universe of $S$-decidable and $S$-shiny theory combination.
    For readability, $\{n\}^{c}$ denotes $\No\setminus\{n\}$.}
    \label{fig:diagram-Ss-Sd}
\end{figure}

We conclude the first generalization of shininess with some examples:
\begin{example}
\begin{enumerate} 
\item Consider a one-sorted variant of the very popular SMT-LIB theory of bit-vectors, where the single domain includes all bit-vectors of the same width.
The set $X$ of all possible sizes of models is comprised of all the powers of two, a decidable set.
Thus, this variant is $X$-decidable, and can be combined with any $X$-shiny theory, thanks to \Cref{theo:Ipart-Ss-Sd}.
\item The first-order theory of fields only has finite models of cardinalities in the set $Y=\{p^{n} : n\in \No, \text{$p$ is prime}\}$, which is again decidable, and is therefore $Y$-decidable.
\item Given a finite field of cardinality $p^{n}$, the theory of vector spaces over this field only has finite models of cardinality in $Z=\{kp^{n} : k\in\mathbb{N}\}$, also decidable, and is therefore $Z$-decidable.
\end{enumerate}
\end{example}

\section{Second Generalization}\label{sec:newmethod}
We now present our second generalization of the shiny combination theorem. It allows one to combine two sets of theories: the first, weaker set, has no requirement on the existence of algorithms for it (other than decidability, of course). The second requires an algorithm for computing a variant of a spectra.

\begin{definition}\label{def:restricted-final}
    Take a computable set $S\subseteq \No$.
    \begin{enumerate}
        \item A decidable theory $\T$ is $S$-restricted if, for every quantifier-free formula $\phi$, $\spec_{\T}(\phi)\setminus S\neq\emptyset$ implies $\aleph_{0}\in\spec_{\T}(\phi)$.
        \item A decidable theory is $S$-final if there is an algorithm that takes a quantifier-free formula $\phi$ and returns $0$ if $\phi$ is not $\T$-satisfiable, and otherwise an $n\in\No$, not necessarily in $S$, such that $\spec_{\T}(\phi)\cap \omg{S}=[n,\aleph_{0}]\cap \omg{S}$.
    \end{enumerate}
\end{definition}

Let us first see relations between these properties and others.

\begin{proposition}\label{prop:limits-Sf-Sr}
  \begin{enumerate}
      \item $\class{S-f}\subseteq\class{SI}\subseteq\class{S-r}$, for any $S$.
      \item $\class{$\No$-r}=\class{}$, $\class{$\No$-f}=\class{shiny}$, and  $\class{$\emptyset$-r}=\class{$\emptyset$-f}=\class{SI}$.
  \end{enumerate}  
\end{proposition}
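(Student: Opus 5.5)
We verify each inclusion and equality directly from \Cref{def:restricted-final}, together with \Cref{LowenheimSkolem}, \Cref{compactness2} and \Cref{prop:alternative-definition-shininess}.

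\textbf{Part 1.} For $\class{S-f}\subseteq\class{SI}$, let $\T$ be $S$-final and $\phi$ quantifier-free with $\spec_{\T}(\phi)\neq\emptyset$. Then $\phi$ is $\T$-satisfiable, so the algorithm returns some $n\in\No$ with $\spec_{\T}(\phi)\cap\omg{S}=[n,\aleph_{0}]\cap\omg{S}$. Since $\aleph_{0}$ lies in both $\omg{S}$ and $[n,\aleph_{0}]$, we get $\aleph_{0}\in\spec_{\T}(\phi)$, so $\T$ is stably infinite. For $\class{SI}\subseteq\class{S-r}$, note that $\spec_{\T}(\phi)\setminus S\neq\emptyset$ gives $\spec_{\T}(\phi)\neq\emptyset$, and stable infiniteness then yields $\aleph_{0}\in\spec_{\T}(\phi)$. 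Both theories are decidable by assumption.

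\textbf{Part 2.} For $\class{$\No$-r}=\class{}$: if $\spec_{\T}(\phi)\setminus\No\neq\emptyset$, that element can only be $\aleph_{0}$, since spectra lie in $\N$. So the condition holds vacuously for every decidable theory. For $\class{$\emptyset$-r}=\class{SI}$: here $\spec_{\T}(\phi)\setminus\emptyset\neq\emptyset$ is just nonemptiness, so the definition coincides with stable infiniteness. For $\class{$\emptyset$-f}=\class{SI}$, the inclusion $\subseteq$ is Part 1. Conversely, if $\T$ is decidable and stably infinite, we first decide satisfiability of $\phi$, returning $0$ if it is unsatisfiable and $1$ otherwise. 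The requirement is then $\spec_{\T}(\phi)\cap\{\aleph_{0}\}=\{\aleph_{0}\}$, which is exactly stable infiniteness.

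The main case is $\class{$\No$-f}=\class{shiny}$, where $\omg{\No}=\N$, so the condition reads $\spec_{\T}(\phi)=[n,\aleph_{0}]$ whenever $\phi$ is satisfiable. The converse inclusion is then immediate: for shiny $\T$, the algorithm of \Cref{prop:alternative-definition-shininess} already returns $0$ on unsatisfiable formulas and the required $n$ otherwise. For $\class{$\No$-f}\subseteq\class{shiny}$, the algorithm of an $\No$-final theory is exactly the algorithm required by \Cref{prop:alternative-definition-shininess}. One detail needs checking: uncountable cardinalities, which are handled by \Cref{LowenheimSkolem} inside that proposition. Since $n\in\No$ is finite, the finite model property also follows.

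I expect no real obstacle. The only point needing care is that spectra are subsets of $\N$ (so $\aleph_{0}$ is the only element outside $\No$), together with the bookkeeping that decidability is assumed throughout.
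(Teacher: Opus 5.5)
Your proof is correct and follows the paper's argument. For Part 1 it uses the observation that $\aleph_{0}\in\omg{S}$. For Part 2 it shows that $\No$-restrictedness is vacuous for decidable theories and that $\No$-finality is the characterization in \Cref{prop:alternative-definition-shininess}. You also verify the $S=\emptyset$ equalities, where $\omg{S}=\{\aleph_{0}\}$, both definitions reduce to stable infiniteness, and decidability supplies the required output. The paper's proof does not treat these equalities at all, so your version is more complete on that point.
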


\begin{proof}
\begin{enumerate} 
\item Suppose $\T$ is $S$-final, and take a quantifier-free $\phi$:
    if $\phi$ is not $\T$-satisfiable we have nothing to prove;
    otherwise our algorithm gives an $n\in\No$ such that $\spec_{\T}(\phi)\cap \omg{S}=[n,\aleph_{0}]\cap \omg{S}$, and since $\aleph_{0}\in \omg{S}$, $\T$ is stably infinite.
    If $\T$ is stably infinite we get that if $\spec_{\T}(\phi)$ is not empty then $\aleph_{0}\in\spec_{\T}(\phi)$, meaning $\T$ is trivially $S$-restricted.

\item An $\No$-restricted theory imposes no further requirements on the theory other than being decidable, as we cannot have $\spec_{\T}(\phi)\setminus \No$ not empty unless $\spec_{\T}(\phi)$ already contains $\aleph_{0}$.
    $\No$-finality implies there is an algorithm that takes a quantifier-free $\phi$ and returns $0$ if $\phi$ is not $\T$-satisfiable, and otherwise an $n\in\No$ such that $\spec_{\T}(\phi)=[n,\aleph_{0}]$, exactly the notion of shininess in \Cref{prop:alternative-definition-shininess}.\lncsqed
    \end{enumerate}\let\qed\relax
\end{proof}

\begin{figure}[t]
\begin{algorithm}[H]
    \begin{algorithmic}[1]
        \Function{SR+SF}{$\phi_1, \phi_2$}
        \If{$\phi_2$ is not $\T_2$-satisfiable}
              \Return false
            \EndIf
          \State{$n\gets$ computed for $\phi_2$ and $\T_2$, according to \Cref{def:restricted-final}.}
            \If{$\phi_1\wedge \neq(x_{1},\ldots,x_{n})$ is $\T_1$-satisfiable}
                \Comment{$x_{1},\ldots,x_{n}$ are fresh}
                \State \Return true
            \EndIf
            \State \Return false
        \EndFunction
\end{algorithmic}
    \caption{Pseudocode for determining non-emptiness of $\spec_{\T_1}(\phi_1)\cap\spec_{\T_2}(\phi_2)$ when $\T_1$ is $S$-restricted and $\T_2$ is $S$-final.} \label{alg-srestricted-sfinal}
\end{algorithm}
\end{figure}

Now, we present our second new combination theorem:
its proof follows that of \Cref{theo:Ipart-Ss-Sd}, but this time using \Cref{alg-sdec-sshiny}.

\begin{restatable}{theorem}{IpartSfSr}\label{theo:Ipart-Sf-Sr}
    Let $\T_{1}$ be $S$-restricted, and $\T_{2}$ be $S$-final.
    Then $\T_{1}\sqcup\T_{2}$ is decidable. 
\end{restatable}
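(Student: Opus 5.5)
We reduce, via \Cref{lem-fontaine}, to deciding whether $\spec_{\T_{1}}(\phi_{1})\cap\spec_{\T_{2}}(\phi_{2})\neq\emptyset$ for conjunctions of literals $\phi_{1}$ over $\Sigma_{1}$ and $\phi_{2}$ over $\Sigma_{2}$. \Cref{alg-srestricted-sfinal} is the decision procedure. Both theories are decidable, so the test on line 1 is computable. The value $n\in\No$ on line 2 is computable because $\T_{2}$ is $S$-final. The test on line 3 is computable because $\T_{1}$ is decidable and $\phi_{1}\wedge\neq(x_{1},\ldots,x_{n})$ is quantifier-free, with $x_{1},\ldots,x_{n}$ fresh. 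Termination is immediate.

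For correctness, assume $\phi_{2}$ is $\T_{2}$-satisfiable, so we obtain $n\in\No$ with $\spec_{\T_{2}}(\phi_{2})\cap\omg{S}=[n,\aleph_{0}]\cap\omg{S}$. In particular $\aleph_{0}\in\spec_{\T_{2}}(\phi_{2})$.

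\textbf{Case: the algorithm returns true.} Then $\phi_{1}\wedge\neq(x_{1},\ldots,x_{n})$ is $\T_{1}$-satisfiable, so some $\T_{1}$-interpretation $\A\vDash\phi_{1}$ has $|\dom{\A}|=m\geq n$, possibly $m=\aleph_{0}$ after applying \Cref{LowenheimSkolem}.
\begin{itemize}
\item If $m\in\omg{S}$, then $m\in[n,\aleph_{0}]\cap\omg{S}\subseteq\spec_{\T_{2}}(\phi_{2})$, and the intersection is nonempty.
\item If $m\notin\omg{S}$, then $m$ is finite and $m\in\spec_{\T_{1}}(\phi_{1})\setminus S$. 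Since $\T_{1}$ is $S$-restricted, $\aleph_{0}\in\spec_{\T_{1}}(\phi_{1})$. As $\aleph_{0}$ also lies in $\spec_{\T_{2}}(\phi_{2})$, the intersection is again nonempty.
\end{itemize}

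\textbf{Case: the intersection is nonempty.} Let $m$ be a common element of the two spectra.
\begin{itemize}
\item If $m\in\omg{S}$, then $m\in\spec_{\T_{2}}(\phi_{2})\cap\omg{S}$ gives $m\geq n$. A $\T_{1}$-interpretation of $\phi_{1}$ of size $m$ can assign the fresh variables $x_{1},\ldots,x_{n}$ to distinct elements, so line 3 returns true.
\item If $m\notin\omg{S}$, then $m\in\spec_{\T_{1}}(\phi_{1})\setminus S$. By $S$-restrictedness $\aleph_{0}\in\spec_{\T_{1}}(\phi_{1})$, and an infinite model of $\phi_{1}$ again satisfies $\phi_{1}\wedge\neq(x_{1},\ldots,x_{n})$, so line 3 returns true.
\end{itemize}
If $\phi_{2}$ is unsatisfiable, its spectrum is empty, and returning false on line 1 is correct.

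The main subtlety is the case where a witnessing cardinality lies outside $\omg{S}$. There $S$-finality of $\T_{2}$ says nothing, and we must route through $\aleph_{0}$. This works because $S$-restrictedness of $\T_{1}$ pushes $\aleph_{0}$ into $\spec_{\T_{1}}(\phi_{1})$, while $S$-finality of $\T_{2}$, together with $\aleph_{0}\in\omg{S}$, guarantees $\aleph_{0}\in\spec_{\T_{2}}(\phi_{2})$ for satisfiable $\phi_{2}$.
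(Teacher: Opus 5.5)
Your proof is correct and follows the paper's argument. It uses the same reduction via \Cref{lem-fontaine}, the same procedure (\Cref{alg-srestricted-sfinal}), and the same idea of routing through $\aleph_{0}$ via $S$-restrictedness; you split on whether the witnessing cardinality lies in $\omg{S}$, where the paper splits on whether $\aleph_{0}\in\spec_{\T_{1}}(\phi_{1})$ (using the contrapositive $\spec_{\T_{1}}(\phi_{1})\subseteq S$), and you state explicitly that $\aleph_{0}\in\spec_{\T_{2}}(\phi_{2})$, which the paper only uses tacitly.
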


\begin{proof}
        Let $\phi_{1}$ and $\phi_{2}$ be conjunctions of literals over the signatures $\Sigma_{1}$ and $\Sigma_{2}$, respectively.
    We state \Cref{alg-srestricted-sfinal} presents an adequate decision procedure.
    That we can compute an $n\in\No$ such that $\spec_{\T_{2}}(\phi_{2})\cap \omg{S}=[n,\aleph_{0}]\cap \omg{S}$ follows from the fact that $\T_{2}$ is $S$-final.
    We can compute if $\phi_{1}\wedge\neq(x_{1},\ldots,x_{n})$ is $\T_{1}$-satisfiable since $\T_{1}$ is decidable.
    Next we prove correctness.

    ($\Rightarrow$):
    Suppose $\spec_{\T_{1}}(\phi_{1})\cap \spec_{\T_{2}}(\phi_{2})$ is not empty.
    If $\aleph_{0}$ is an element they share then $\phi_{1}\wedge\neq(x_{1},\ldots,x_{n})$ is $\T_{1}$-satisfiable, so we will get true.
    If $\aleph_{0}$ is not in the intersection of the spectra, then $\aleph_{0}\notin \spec_{\T_{1}}(\phi_{1})$ and so $\spec_{\T_{1}}(\phi_{1})\subseteq S$, and the fact that $\phi_{1}\wedge\neq(x_{1},\ldots,x_{n})$ is $\T_{1}$-satisfiable implies $\spec_{\T_{1}}(\phi_{1})\cap[n,\aleph_{0}]\subseteq \omg{S}\cap[n,\aleph_{0}]$ is not empty, thus again we get true.

    ($\Leftarrow$)
    Suppose this time $\spec_{\T_{1}}(\phi_{1})\cap \spec_{\T_{2}}(\phi_{2})$ is empty.
    Then we have that $\phi_{1}\wedge\neq(x_{1},\ldots,x_{n})$ will not be $\T_{1}$-satisfiable, since otherwise the intersection would not be empty.
    As $\aleph_{0}\notin\spec_{\T_{1}}(\phi_{1})$ we get $\spec_{\T_{1}}(\phi_{1})\subseteq S$, and since $\spec_{\T_{2}}(\phi_{2})\cap \omg{S}=[n,\aleph_{0}]\cap \omg{S}$ we get false.

\end{proof}

\begin{example}
The first example of a finitely witnessable (property related to politeness) theory that is not strongly finitely witnessable (property related to strong politeness, equivalent to shininess over decidable theories) was $\Teven$:
this theory has only interpretations of even size, or infinite ones.
Yet, in a sense, it is a surprisingly well-behaved theory:
if $S$ is the set of positive even numbers, then it should not be too hard to prove that $\spec_{\Teven}(\phi)=[n,\aleph_{0}]\cap \omg{S}$, where 
$n$ is the minimal size of any first-order model of $\phi$, which we assume to be quantifier-free.
This is almost the definition of shininess, except for the pesky intersection with $\omg{S}$.
Using the notions just defined, $\Teven$ is seen in fact to be $S$-shiny, and thus combinable with any $S$-restricted theory.
\end{example}

As we did for $S$-decidability and $S$-shininess, we analyze how the sets of $S$-restricted and $S$-final theories change as $S$ changes.


\begin{restatable}{proposition}{incomparableSfSr}\label{prop:incomparable-Sf-Sr}
\begin{enumerate}
    \item If $S\subsetneq T$, $\class{S-r}\subsetneq \class{T-r}$, and $\class{T-f}\subsetneq \class{S-f}$.
    \item If $S$ and $T$ are incomparable, so are $\class{S-r}$ and $\class{T-r}$, and $\class{S-f}$ and $\class{T-f}$.
\end{enumerate}
\end{restatable}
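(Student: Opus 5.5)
The proof splits into two parts: the inclusions, and the separations. The separating examples will be modeled on those in \Cref{prop:incomparable-Ss-Sd}.

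\textbf{Inclusions.} Suppose $S\subseteq T$.
\begin{itemize}
\item If $\T$ is $S$-restricted and $\spec_{\T}(\phi)\setminus T\neq\emptyset$, then $\spec_{\T}(\phi)\setminus S\neq\emptyset$. Hence $\aleph_{0}\in\spec_{\T}(\phi)$, so $\T$ is $T$-restricted.
\item If $\T$ is $T$-final, the same algorithm witnesses $S$-finality. Intersecting $\spec_{\T}(\phi)\cap\omg{T}=[n,\aleph_{0}]\cap\omg{T}$ with $\omg{S}\subseteq\omg{T}$ gives $\spec_{\T}(\phi)\cap\omg{S}=[n,\aleph_{0}]\cap\omg{S}$.
\end{itemize}
Properness in item 1 will follow from the item 2 constructions applied to a single $n_{t}\in T\setminus S$, exactly as at the end of the proof of \Cref{prop:incomparable-Ss-Sd}.

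\textbf{Separating restricted classes.} Take $n_{t}\in T\setminus S$. I need a decidable theory that is $T$-restricted but not $S$-restricted. The natural candidate is $\T_{=n_{t}}$, axiomatized by $\psi_{=n_{t}}$, which is decidable by \cite{POPL}. Every satisfiable $\phi$ has spectrum $\{n_{t}\}$, which is nonempty outside $S$ but lacks $\aleph_{0}$, so the theory is not $S$-restricted. Since $\{n_{t}\}\setminus T=\emptyset$, it is $T$-restricted. The same works symmetrically with $n_{s}\in S\setminus T$ when $S$ and $T$ are incomparable.

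\textbf{Separating final classes.} Take $n_{s}\in S\setminus T$. I need a theory that is $T$-final but not $S$-final. The planned construction uses $\Spn$ with an undecidable set $U$, as in $\T_{s}$, with axioms
\[\{\psi_{=n_{s}}\vee\psi_{\geq k}:k\in\No\}\cup\{\psi_{=n_{s}}\rightarrow\neg P_{k}:k\notin U\}\cup\{P_{i}\rightarrow\neg P_{j}:i\neq j\}.\]
Its spectra are $\spec_{\T_{Eq}}(\phi')\cap(\{n_{s}\}\cup[n_{s}+1,\aleph_{0}]\text{ or so})$, and I must check the details:
\begin{itemize}
\item \emph{Decidability} is argued as in \Cref{prop:incomparable-Ss-Sd}, since every satisfiable formula has an infinite model.
\item \emph{$T$-finality.} Since $n_{s}\notin T$, the spectrum restricted to $\omg{T}$ is $[\max(\minmod_{\T_{Eq}}(\phi'),\,n_{s}+1),\aleph_{0}]\cap\omg{T}$. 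This bound is computable, so the theory is $T$-final.
\item \emph{Failure of $S$-finality.} Because $n_{s}\in S$, the restriction to $\omg{S}$ records whether $n_{s}\in\spec(P_{k})$, i.e.\ whether $k\in U$. Any $S$-final algorithm would then decide $U$: an output $n\leq n_{s}$ means $n_{s}$ is present, while an output above $n_{s}$ means it is absent.
\end{itemize}
One subtlety is that $[n,\aleph_{0}]\cap\omg{S}$ must be an interval in $\omg{S}$, while the spectrum here misses $1,\ldots,n_{s}-1$. This is harmless, because for $P_{k}$ the spectrum is either $\{n_{s}\}\cup[n_{s}+1,\aleph_{0}]$ or $[n_{s}+1,\aleph_{0}]$, and both are upward intervals.

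\textbf{Main obstacle.} The hard step is this undecidability argument. The spectrum of $P_{k}$ must be an interval in either case, so that $T$-finality genuinely holds. At the same time, $S$-finality must encode membership of $k$ in $U$. Checking $T$-finality uniformly for arbitrary quantifier-free $\phi$, including equalities that force small models, needs care.
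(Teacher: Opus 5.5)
Your inclusions, and your separation of the restricted classes by $\T_{=n_{t}}$ and $\T_{=n_{s}}$, are correct and match the paper. The gap is in the separation of the final classes. The axiom family $\{\psi_{=n_{s}}\vee\psi_{\geq k}:k\in\No\}$, which you take over from \Cref{prop:incomparable-Ss-Sd}, says that every model has \emph{exactly} $n_{s}$ elements or is infinite. So every spectrum in your theory is contained in $\{n_{s},\aleph_{0}\}$, and the spectrum of $P_{k}$ is $\{n_{s},\aleph_{0}\}$ or $\{\aleph_{0}\}$, not $[n_{s},\aleph_{0}]$ or $[n_{s}+1,\aleph_{0}]$ as you assert. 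Since $n_{s}\notin T$, every satisfiable $\phi$ then has spectrum meeting $\omg{T}$ in exactly $\{\aleph_{0}\}$. That set has the form $[n,\aleph_{0}]\cap\omg{T}$ with $n\in\No$ only if $T$ is finite. So for infinite $T$ your theory is not $T$-final and separates nothing; take e.g.\ $S$ the even and $T$ the odd positive integers, and already $\phi$ the literal $x=x$. Likewise, the symmetric theory you plan to reuse for $\class{T-f}\subsetneq\class{S-f}$ fails to be $S$-final whenever $S$ is infinite. In \Cref{prop:incomparable-Ss-Sd} that family was harmless, because $S$-decidability only probes individual finite cardinalities. Finality is a statement about upward intervals, and the family destroys exactly that structure.

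The repair is easy: replace the family by the single axiom $\psi_{\geq n_{s}}$. The paper instead uses no cardinality axiom at all, taking $\{\psi_{=n_{t}}\rightarrow\neg P_{k}:k\notin U\}\cup P_{\neq}$. With $\psi_{\geq n_{s}}$, take a conjunction of literals $\phi$ and put $m_{0}=\minmod_{\T_{Eq}}(\phi^{\prime})$. The spectrum is $[\max(m_{0},n_{s}),\aleph_{0}]$, minus $n_{s}$ when the positive $P$-literal is some $P_{k}$ with $k\notin U$. Your computations then go through verbatim. For $T$-finality, output $\max(m_{0},n_{s}+1)$. An $S$-final algorithm applied to $P_{k}$ would output a value $\leq n_{s}$ exactly when $k\in U$, which would decide $U$. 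This corrected version is slightly tidier than the paper's argument, which lacks the lower bound and so has to split on whether $n_{t}$ is the least element of $T$.
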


\begin{proof}
        Suppose $\T$ is $S$-restricted, so if $\spec_{\T}(\phi)\setminus S\neq \emptyset$ then $\aleph_{0}\in\spec_{\T}(\phi)$ (for quantifier-free $\phi$).
    If $\spec_{\T}(\phi)\setminus T\neq\emptyset$, as $S\subseteq T$ we obtain $\spec_{\T}(\phi)\setminus S\neq\emptyset$ and thus $\aleph_{0}\in\spec_{\T}(\phi)$, meaning $\T$ is $T$-restricted.

    Now suppose $\T$ is $T$-final:
    there is an algorithm that takes a (quantifier-free) $\phi$ and returns $0$ if it is not $\T$-satisfiable, and otherwise an $n\in\No$ such that $\spec_{\T}(\phi)\cap \omg{T}=[n,\aleph_{0}]\cap \omg{T}$.
    In this last case, since, again, $S\subseteq T$, $T\cap S=S$, and so $\spec_{\T}(\phi)\cap \omg{S}=(\spec_{\T}(\phi)\cap \omg{S})\cap \omg{T}=([n,\aleph_{0}]\cap \omg{T})\cap \omg{S}=[n,\aleph_{0}]\cap \omg{S}$, we get the same algorithm shows $\T$ is $S$-final.
    
    Take $n_{s}\in S\setminus T$, and $n_{t}\in T\setminus S$.
    The theory $\T_{=n_{s}}$ is decidable and $S$-restricted, but not $T$-restricted, and analogous statements can be made about $\T_{=n_{t}}$.

    Define the $\Spn$-theory $\T_{s}$ axiomatized by
    \[\{\psi_{=n_{t}}\rightarrow\neg P_{k} : k\notin U\}\cup\{P_{i}\rightarrow\neg P_{j} : i\neq j\}.\]
    It is decidable:
    given a conjunction of literals $\phi$with at most one positive $P$-literal $P_{k}$, where $\phi^{\prime}$ is obtained from $\phi$ by removing all $P$-literals, we state $\phi$ is $\T_{s}$-satisfiable if and only if $\phi^{\prime}$ is $\T_{Eq}$-satisfiable;
    furthermore, its spectrum equals $[\minmod_{\T_{s}}(\phi^{\prime}),\aleph_{0}]$ if $k\in U$, and $[\minmod_{\T_{s}}(\phi^{\prime}),\aleph_{0}]\setminus\{n_{t}\}$ otherwise.
    Indeed, given a $\T_{Eq}$-interpretation satisfying $\phi^{\prime}$ with any cardinality greater than or equal to $\minmod_{\T_{Eq}}(\phi^{\prime})$, we simply make $P_{n}$ true if and only if $n=k$;
    this construction only fails to work when $k\notin U$ and the desired cardinality is $n_{t}$.

    Because $n_{t}$ is not in $S$, $\spec_{\T_{s}}(\phi)\cap \omg{S}=[\minmod_{\T_{Eq}}(\phi^{\prime}),\aleph_{0}]\cap \omg{S}$ (assuming $\phi^{\prime}$ is $\T_{Eq}$-satisfiable, otherwise we just return $0$), and thus $\T_{s}$ is $S$-final.
    But either $n_{t}$ is not the first element of $T$, and $\spec_{\T_{s}}(P_{k})\cap \omg{T}=\omg{T}\setminus \{n_{t}\}$ for a $k\notin U$ (which is not a set of the form $[n,\aleph_{0}]\cap \omg{T}$);
    or it is, and then $\spec_{\T_{s}}(P_{k})\cap \omg{T}=[n_{t},\aleph_{0}]\cap \omg{T}$ if $k\in U$, and $[n,\aleph_{0}]\cap \omg{T}$ otherwise, for any $n>n_{t}$, meaning no algorithm could output the $n_{t}$, respectively $n$, given $k$.
    $\T_{s}$ is therefore not $T$-final, a similar construction giving a $T$-final theory which is not $S$-final.

    Finally, if $S\subsetneq T$ and $n_{t}\in T\setminus S$, we have that the theories $\T_{=n_{t}}$ and $\T_{t}$ also show $\class{S-r}\subsetneq\class{T-r}$ and $\class{T-f}\subsetneq\class{T-f}$.
\end{proof}

\begin{figure}[htp]
    \centering
    \begin{tikzcd}[sep=0.4mm]
      & \phantom{0} \arrow[bend right=90,swap,red,<->]{dddddd}{\Gal} & & \class{}\arrow[dash]{dl}{}\arrow[dash]{dr}{} & & & & & &\\
      & & \class{S-d}\arrow[dash]{dl}{} & & \class{S-r}\arrow[dash]{ddrr}{} & & & &\\
      & \class{CFS}\arrow[dash]{dd}{} & & & & & & &\\
      \phantom{O}\arrow[dashed,dash]{r}{}&\phantom{O}\arrow[dashed,dash]{rrrrr}{}& & & & & \class{SI}\arrow[dash]{ddll}{}\arrow[dash,dashed]{r} &\phantom{0}\\
      & \class{\G}\arrow[dash]{dr}{} & & & & & & & \\
      & & \class{S-s}\arrow[dash]{dr}{} & & \class{S-f}\arrow[dash]{dl}{} & & & & \\
      & \phantom{0} & & \class{shiny} & & & & & & 
    \end{tikzcd}
    \caption{The general picture involving the notions defined in this paper:
    notice how \Cref{fig:diagram-Ss-Sd} is a more detailed version of the left side of the diagram.}
    \label{fig:full-Hasse-diagram}
  \end{figure}

\section{Sharpness of the new methods}\label{sec:test}

We now set out to prove that the new combination theorems \Cref{theo:Ipart-Ss-Sd,theo:Ipart-Sf-Sr} are sharp.
For the former, this means every theory combinable with all $S$-decidable theories must be $S$-shiny, and that 
every theory combinable with all $S$-shiny theories must be $S$-decidable.
Similar sharpness results are proven for the latter.
We utilize the method of {\em test theories}, introduced in \cite{POPL}.

\subsection{Test Theories}
\label{subsec:test}
\begin{table}[t]
\centering
\renewcommand{\arraystretch}{1.5}
\centering
\begin{tabular}{cc}\hline
Theory & Details \\
\hline
$\T_{n}^{m}$ & 
\makecell{
\underline{Axiomatization:} $\{\psi_{=n}\vee\psi_{=m}\}\cup\{P_{k}\rightarrow\psi_{=m} : k\in \unc\}\cup P_{\neq}$\\[.5em]
\underline{Properties:}
Decidable, $S$-decidable (if $n\notin S$), $S$-restricted (if $m,n\in S$)\\
$m\notin\spec_{\T}(\phi), n\in\spec_{\T}(\phi)\Rightarrow [\phi\wedge P_k \text{ is } \T\sqcup\T_{n}^{m}\text{-SAT}\Leftrightarrow k\notin U ] $} \\\arrayrulecolor{gray!50}\hline
$\T_{=}^{P}$ & 
\makecell{
\underline{Axiomatization:}
$\{P_{n}\rightarrow\psi_{=n} : n\in\No\}$ \\[.5em]
\underline{Properties:}
Decidable, $S$-decidable (for any $S$)\\
  $\phi\wedge P_n\text{ is } 
\T\sqcup\T_{=}^{P}\text{-SAT}\Leftrightarrow 
n\in\spec_{\T}(\phi)$}
\\\arrayrulecolor{gray!50}\hline
$\T^{S}_{\leq}$ & 
\makecell{
\underline{Axiomatization:}
$\{P_{n}\rightarrow\psi_{\leq F(n)} : F(n)\in\mathbb{N}\}\cup\{\neg\psi_{=n} : n\notin S\}\cup P_{\neq}$\\[.5em]
\underline{Properties:}
Decidable, $\overline{S}$-decidable, $S$-restricted  \\ 
$\spec_{\T}(\phi)\neq\emptyset, S\cap\spec_{\T}(\phi)=\emptyset\Rightarrow[\phi\wedge P_{k}\text{ is }\T\sqcup\T_{\leq}^{S}\text{-SAT}\Leftrightarrow F(k)=\aleph_{0}]$}
\\\arrayrulecolor{black}\hline
$\T^{S}_{=}$ & 
\makecell{
\underline{Axiomatization:}
$\{P_{n}\rightarrow \psi_{=n} : n\in S\}\cup \{\neg P_{n} : n\notin S\}\cup P_{\neq}$\\[.5em]
\underline{Properties:}
Decidable, $S$-shiny, $S$-restricted \\
$k\in S\Rightarrow[\phi\wedge P_k \text{ is } \T\sqcup\T_{=}^{S}\text{-SAT}\Leftrightarrow k\in\spec_{\T}(\phi)]$
}
 \\\arrayrulecolor{gray!50}\hline
$\T_{S-r}^{m,n}$ & 
\makecell{
\underline{Axiomatization:}
$\{\neg \psi_{=k} : k\notin S\cup \{n\}\}\cup \{\psi_{=n}\vee\psi_{\geq m}\}\cup\{\psi_{=n}\rightarrow\neg P_{k} : k\notin U\}\cup P_{\neq}$\\[.5em]
\underline{Properties:}
 Decidable, $S$-final \\ 
$n\in\spec_{\T}(\phi), m=M_{\T}(\phi)\Rightarrow[\phi\wedge P_{k} \text{ is } \T\sqcup\T^{m,n}_{S-r}\text{-SAT}\Leftrightarrow k\in U]$}
\\\arrayrulecolor{black}\hline
\end{tabular}
\vspace{2mm}
\caption{Test theories and their properties; 
$S\subseteq \No$;
for the first and last theories, $m>n$;
for the last, $n\notin S$. 
We define $P_{\neq}=\{P_{i}\rightarrow\neg P_{j} : i\neq j\}$ and $M_{\T}(\phi)=1+\max\spec_{\T}(\phi)$ to make the table more readable.
}
\label{tab:othertesttheories}
\end{table}

The idea behind test theories goes more or less as follows:
we want to prove that if a theory $\T$ is combinable with every theory of $X$, then it admits some property;
this could be done by looking at all $\T\sqcup\T^{\prime}$, for $\T^{\prime}\in X$.
But considering all theories of $X$ is rarely feasible, and is certainly not very effective.
So, from $X$, we select special theories, {\em test theories}, such that the satisfiability problem of the combination of $\T$ and a test theory directly relates to properties from $\T$.
In this sense, the set of test theories will be a very difficult set of theories to combine with $\T$, as being combinable with them forces $\T$ to have a given property.

We need $5$ sets of test theories, three found in \cite{POPL}, and two new. 
Let $\Spn$ be the signature with no function symbols,  
and nullary predicate symbols $\{P_{n}:n\in\No\}$:
all of our test theories will be $\Spn$-theories, and
are described in \Cref{tab:othertesttheories}, which relies on
 an undecidable set $U\subseteq \No$, 
 and a function $F:\No\rightarrow\No\cup\{\aleph_{0}\}$ such that $\{(m,n)\in{\No}^{2} : n\leq F(m)\}$ is decidable, but $\{n\in\No : F(n)=\aleph_{0}\}$ is not.\footnote{An example for $F$ assumes an enumeration of Turing machines, takes the index of a machine, and  returns the number of steps it takes to halt, or $\aleph_{0}$ if it doesn't.}

The three test theories from \cite{POPL} are $\T^{m}_{n}$, $\T^{P}_{=}$ and $\T^{S}_{\leq}$.
$\T^{m}_{n}$ is a $\Spn$-theory with only interpretations of size $m$ or $n$, and where $P_{k}$ only has interpretations of size $m$ whenever $k\in U$;
so $\spec_{\T^{m}_{n}}(P_{k})=\{n,m\}$ if $k\notin U$, and $\{m\}$ otherwise.
A $\T^{P}_{=}$-interpretation where $P_{n}$ is true must have exactly $n$ elements;
so $\spec_{\T^{P}_{=}}(P_{n})=\{n\}$.
The theory $\T_{\leq}^{S}$ has only interpretations of cardinality in $\omg{S}$, and if $P_{n}$ is true and $F(n)$ is finite, then we have at most $F(n)$ elements;
so $\spec_{\T}(P_{n})=S\cap[1,F(n)]$ if $F(n)$ is finite, and $\omg{S}$ otherwise.

The two new test theories are $\T^{S}_{=}$ and $\Test{S-r}^{m,n}$.
The theory $\T^{S}_{=}$ is similar to $\T_{=}^{P}$, with a twist:
if $P_{n}$ is true (what can only happen  for $n\in S$) then the interpretation has size $n$.
So, $\spec_{\T^{S}_{=}}(P_{n})=\emptyset$ if $n\notin S\subseteq \No$, and $\{n\}$ otherwise.
The theory $\Test{S-r}^{m,n}$ is quite similar to $\T^{m}_{n}$, and will be used to test whether the spectrum of a formula is entirely contained in a computable, infinite set $S\subseteq \No$.
It depends on two numbers $m$ and $n$, with $m>n$ and $n\notin S$, and its interpretations must:
have cardinality in $S$ or equal to $n$;
have cardinality $n$ or at least $m$;
and, if they have cardinality $n$, then $P_{k}$ can only be true if $k\in U$.
To summarize it, $\spec_{\Test{S-r}^{m,n}}(P_{k})=\omg{S}\cap(\{n\}\cup[m,\aleph_{0}])$ if $k\in U$, and $\omg{S}\cap[m,\aleph_{0}]$ otherwise.

\begin{restatable}{theorem}{anothertableiscorrect}    
\label{theo:other-test-theories}
\label{theo:test-theories}
    \Cref{tab:othertesttheories} is correct:
    each theory satisfies the properties listed for it.
\end{restatable}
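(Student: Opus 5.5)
The proof goes row by row through \Cref{tab:othertesttheories}. For the three theories taken from \cite{POPL} ($\T_{n}^{m}$, $\T_{=}^{P}$, $\T_{\leq}^{S}$) we cite the properties proved there: decidability and the combination characterisations. We then verify only the new claims: $S$-decidability of $\T_n^m$ when $n\notin S$, $S$-restrictedness of $\T_n^m$ when $m,n\in S$, $\overline{S}$-decidability and $S$-restrictedness of $\T_{\leq}^{S}$, and $S$-decidability of $\T_{=}^{P}$. For each theory, a quantifier-free formula $\phi$ is taken, without loss of generality, to be a conjunction of literals with at most one positive $P$-literal (because of $P_{\neq}$). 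We write $\phi'$ for $\phi$ with its $P$-literals removed, and we use that $\spec_{\T_{Eq}}(\phi')=[\minmod_{\T_{Eq}}(\phi'),\aleph_0]$ is computable. The spectra are then explicit:
\begin{itemize}
\item $\spec_{\T_n^m}(\phi)\subseteq\{n,m\}$, so $\T_n^m$ is $S$-restricted when $m,n\in S$.
\item The only size of $\T_n^m$ that could lie in $S$ is $m$, and $m\in\spec_{\T_n^m}(\phi)$ iff $\minmod_{\T_{Eq}}(\phi')\le m$, which is computable. Sizes outside $\{n,m\}$ are never realised. This gives $S$-decidability when $n\notin S$.
\item $\T_{=}^{P}$ has computable finite spectra, since the spectrum is either $\{k\}\cap[\minmod,\aleph_0]$ or $[\minmod,\aleph_0]$. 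It is therefore $S$-decidable for every $S$ by \Cref{prop:incomparable-Ss-Sd} and \Cref{propabout-S-n-d}.
\item $\T_{\leq}^{S}$ has no models of size in $\overline{S}$, so it is trivially $\overline{S}$-decidable. All its finite models have size in $S$, so it is trivially $S$-restricted.
\end{itemize}

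For $\T_{=}^{S}$, decidability and $S$-shininess come from the same computation. If $\phi$ has a positive $P_k$, then $k\notin S$ gives empty spectrum, returned as $(0,\emptyset)$. If $k\in S$, the spectrum is $\{k\}$ when $\minmod_{\T_{Eq}}(\phi')\le k$ and empty otherwise, returned as $(0,\{k\})$ or $(0,\emptyset)$; here $S$ must be computable. If $\phi$ has no positive $P$-literal, we make every $P$ false, so the spectrum is $[\minmod_{\T_{Eq}}(\phi'),\aleph_0]$, returned as $(\minmod,\emptyset)$. In every case the spectrum is finite and contained in $S$, or contains $\aleph_0$, so the theory is $S$-restricted. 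For the combination property with $k\in S$, a $\T\sqcup\T_{=}^{S}$-model of $\phi\wedge P_k$ has size $k$, so $k\in\spec_{\T}(\phi)$. Conversely, a $\T$-model of $\phi$ of size $k$ is expanded by setting $P_k$ true and every other $P_j$ false.

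For $\Test{S-r}^{m,n}$ we compute spectra in the same way. The spectrum of $\phi$ is $[\minmod_{\T_{Eq}}(\phi'),\aleph_0]\cap(\omg{S}\cap[m,\aleph_0]\cup X)$, where $X=\{n\}$ unless $\phi$ contains a positive $P_k$ with $k\notin U$, in which case $X=\emptyset$. Intersecting with $\omg{S}$ removes $n$, because $n\notin S$, and so removes the dependence on $U$. The algorithm therefore returns $0$ if $\phi'$ is unsatisfiable, and otherwise $\max(m,\minmod_{\T_{Eq}}(\phi'))$. This is $S$-finality. Decidability is the same case split, with one subtlety: satisfiability via size $n$ depends on $U$. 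It can be avoided, since whenever $\phi'$ is satisfiable it has an infinite model, and the infinite model satisfies all axioms, including $\psi_{\geq m}$. So $\phi$ is satisfiable iff $\phi'$ is.

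The main obstacle is the combination property. Assume $n\in\spec_{\T}(\phi)$ and $m=M_{\T}(\phi)$, so $\spec_{\T}(\phi)$ is finite with maximum $m-1$. Then every model of $\phi$ has size below $m$, so a joint model of $\phi\wedge P_k$ must have size exactly $n$, and this forces $k\in U$. Conversely, if $k\in U$, a $\T$-model of size $n$ is expanded by making $P_k$ true and the other $P_j$ false. We must check this expansion satisfies $\neg\psi_{=j}$ for $j\notin S\cup\{n\}$ and the disjunction $\psi_{=n}\vee\psi_{\ge m}$; both hold since the size is $n$. The analogous characterisations for the theories from \cite{POPL} we take from there.
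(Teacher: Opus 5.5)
Your proposal is correct where it does its own work and follows the paper's proof essentially step for step: you cite \cite{POPL} for the old rows and read off explicit spectra via $\minmod_{\T_{Eq}}(\phi')$ for the new claims. Your spectrum-based argument that $\T^{S}_{=}$ is $S$-restricted is sounder than the paper's, which claims this theory only has models of size in $S$, false when all $P_i$ are false. One caveat you inherit from the paper: the $\T^{S}_{\leq}$ row is not true as tabulated, so citing \cite{POPL} cannot cover it in that generality. For finite $S$ the theory is undecidable (for $S=\{1\}$, $P_k\wedge\neg(x=y)$ is satisfiable iff $F(k)=\aleph_0$). Also, its combination property needs $\aleph_0\in\spec_{\T}(\phi)$, not just $\spec_{\T}(\phi)\neq\emptyset$: if $\spec_{\T}(\phi)=\{3\}$ with $3\notin S$, then $\phi\wedge P_k$ is unsatisfiable for every $k$.
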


\begin{proof}
\begin{enumerate}
\item $\T^{m}_{n}$ is decidable according to \cite{POPL}.
So we move on to prove that, if $n\notin S$, $\T^{m}_{n}$ is $S$-decidable, and if $m,n\in S$, then $\T^{m}_{n}$ is $S$-restricted.
$\T_{n}^{m}$ only has interpretations of cardinality $n$ or $m$, so the $S$-decidability of $\T_{n}^{m}$ will follow from the fact every $\T_{n}^{m}$-satisfiable quantifier-free formula $\phi$ has a $\T_{n}^{m}$-interpretation of size $m$:
indeed, it is enough to simply add $m-n$ more elements to the domain of a $\T_{n}^{m}$-interpretation of size $n$ that satisfies $\phi$.
It is obvious that $\T^{m}_{n}$ is $S$-restricted as the spectra of all its formulas are contained in $\{n,m\}$, itself contained in $S$.

\item $\T^{P}_{=}$ is decidable by \cite{POPL}.
Let $\phi$ be a conjunction of literals, without loss of generality with at most one positive $P$-literal, and let $\phi^{\prime}$ be the result of removing from $\phi$ any $P$-literals.
If the positive $P$-literal in $\phi$ is $P_{n}$, then $\spec_{\T^{P}_{=}}(\phi)=\{n\}$ if $\minmod_{\T_{Eq}}(\phi^{\prime})\leq n$, and is empty otherwise.
If $\phi$ contains no positive $P$-literals then $\spec_{\T^{P}_{=}}(\phi)=[\minmod_{\T_{Eq}}(\phi^{\prime}),\aleph_{0}]$.
In both cases we can decide whether an element $m$ of $S$ is in these sets, so $\T^{P}_{=}$ is $S$-decidable.

\item Again we find the proof that $\T_{\leq}^{S}$ is decidable in \cite{POPL}.
Now we provide a proof that $\T_{\leq}^{S}$ is $\overline{S}$-decidable and $S$-restricted.
It is obvious this theory is $\overline{S}$-decidable, as it contains no interpretations of cardinality in $\overline{S}$;
it is also obvious this theory is $S$-restricted, as it only contains interpretations of cardinality in $\omg{S}$.

\item Let $\phi$ be a conjunction of literals, with at most one positive $P$-literal (otherwise the formula is immediately not $\T^{S}_{=}$-satisfiable), and let $\phi^{\prime}$ be the result of removing from $\phi$ all of its $P$-literals.
If the positive $P$-literal in $\phi$ is $P_{n}$, for $n\notin S$, then $\phi$ is not $\T^{S}_{=}$-satisfiable;
if $n\in S$, then $\spec_{\T^{S}_{=}}(\phi)=\{n\}$ if $\minmod_{\T_{Eq}}(\phi^{\prime})\leq n$, and is empty otherwise.
For the non-trivial part of this statement, when $\minmod_{\T_{Eq}}(\phi^{\prime})\leq n$, we take a $\T_{Eq}$-interpretation satisfying $\phi^{\prime}$ with $n$ elements (what we can do as $\T_{Eq}$ is smooth), make $P_{n}$ true and all $P_{m}$ false.

If $\phi$ contains no positive $P$-literals then $\spec_{\T^{S}_{=}}(\phi)=[\minmod_{\T_{Eq}}(\phi^{\prime}),\aleph_{0}]$.
This proves not only that this theory is decidable, but also that it is $S$-shiny:
in the case that the spectrum equals $\{n\}$ we must have $n\in S$.

Suppose now that $k\in\spec_{\T}(\phi)$, so there is a $\T$-interpretation $\A$ that satisfies $\phi$ with $k$ elements in its domain;
we make it into a $\T\sqcup\T^{S}_{=}$-interpretation by making $P_{k}$ true and all other $P_{i}$ false, and of course it satisfies $\phi\wedge P_{k}$.
Reciprocally, suppose $\phi\wedge P_{k}$ is satisfied by the $\T\sqcup\T_{=}^{S}$-interpretation $\A$:
because $\A$ is a $\T_{=}^{S}$-interpretation satisfying $P_{k}$ it contains $k$ elements;
because $\A$ is also a $\T$-interpretation satisfying $\phi$, we get $k\in\spec_{\T}(\phi)$.

It is obvious $\T^{S}_{=}$ is $S$-restricted, as it only has interpretations of cardinality in $S$.

\item We proceed to prove $\Test{S-r}^{m,n}$ is decidable and, if $n\notin S$ and $m\geq n$, $S$-final.
Take a conjunction of literals $\phi$:
        we may assume it contains at most one positive $P$-literal, otherwise it will not be $\Test{S-r}^{m,n}$-satisfiable.
        Let $\phi^{\prime}$ be the result of taking from $\phi$ any $P$-literals, and we state $\phi$ is $\Test{S-r}^{m,n}$-satisfiable if and only if $\phi^{\prime}$ is $\T_{Eq}$-satisfiable.
        
        For the non-trivial direction
        take a $\T_{Eq}$-interpretation $\A$ that satisfies $\phi^{\prime}$ with any cardinality in $\omg{S}\cap [m_{0},\aleph_{0}]$, for $m_{0}=\max\{m,\minmod_{\T_{Eq}}(\phi^{\prime})\}$ (so the set does not contain $n$ and is not empty), make $P_{k}$ true only if it occurs positively in $\phi$, and the resulting interpretation is a $\Test{S-r}^{m,n}$-interpretation that satisfies $\phi$.
        Moreover, no element in $\omg{S}\cap[1,m_{0}-1]$ can be in the spectrum of $\phi$, as either:
        $\phi^{\prime}$ is not satisfiable, if $m_{0}=\minmod_{\T_{Eq}}(\phi^{\prime})$;
        or simply $S\cap[1,m_{0}-1]$ is empty, if $m_{0}=m$.

    Now, let $n\in\spec_{\T}(\phi)\setminus S$ and $m=1+\max(\spec_{\T}(\phi)\cup\{n\})$, and suppose that $k\in U$:
    because $n\in\spec_{\T}(\phi)$ there is a $\T$-interpretation $\A$ that satisfies $\phi$ with $n$ elements in its domain;
    we make $P_{k}$ true and all other $P_{i}$ false in $\A$, and the result is a $\T\sqcup\T^{m,n}_{S-r}$-interpretation as $k\in U$, that in addition satisfies $\phi\wedge P_k$.
    Reciprocally, assume $\A$ is a $\T\sqcup\T^{m,n}_{S-r}$-interpretation that satisfies $\phi\wedge P_{k}$:
    because $\A$ is a $\T$-interpretation that satisfies $\phi$, the cardinality of its domain is bounded by $\max(\spec_{\T}(\phi))$;
    since $\A$ is also a $\T^{m,n}_{S-r}$-interpretation this implies the domain of $\A$ has exactly $n$ elements, what in turn implies $k\in U$.

\end{enumerate}
\end{proof}

\subsection{Sharpness for $S$-shininess}

We now prove that \Cref{theo:Ipart-Ss-Sd} is sharp, that is:
Given $S$,
all decidable theories that can be combined with all $S$-shiny theories
must be $S$-decidable,
and all decidable theories that can be combined with all $S$-decidable theories
must be $S$-shiny.
Formally, we utilize operator $G$ from \Cref{sec:theorycombprops}, and the test theories $\T^{S}_{=}$ to prove $G(\class{S-s})\subseteq\class{S-d}$, and $\T_{=}^{P}$, $\T_{\leq}^{S}$ and $\T_{n}^{m}$ to prove $G(\class{S-d})\subseteq \class{S-s}$.

\begin{restatable}{theorem}{IIpartSsSd}\label{theo:IIpart-Ss-Sd}
$G(\class{S-s})\subseteq \class{S-d}$ and $G(\class{S-d})\subseteq \class{S-s}$.
\end{restatable}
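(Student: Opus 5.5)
The first inclusion uses the new test theory $\T^{S}_{=}$. The second combines $\T^{P}_{=}$, $\T^{m}_{n}$ and a $\T_{\leq}$-type theory to pin down the shape of every spectrum, and then extracts an algorithm by a terminating search.

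\textbf{First inclusion, $G(\class{S-s})\subseteq\class{S-d}$.} Let $\T\in G(\class{S-s})$; it is decidable by definition of $G$. By \Cref{theo:test-theories}, $\T^{S}_{=}$ is decidable and $S$-shiny, so $\T\sqcup\T^{S}_{=}$ is decidable. Given a quantifier-free $\phi$ and $k\in\No$, we test whether $k\in S$, which is possible since $S$ is computable. If $k\notin S$ we return an arbitrary Boolean. If $k\in S$ we return whether $\phi\wedge P_{k}$ is $\T\sqcup\T^{S}_{=}$-satisfiable, which by the table is equivalent to $k\in\spec_{\T}(\phi)$. Hence $\T$ is $S$-decidable.

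\textbf{Second inclusion, $G(\class{S-d})\subseteq\class{S-s}$.} Let $\T\in G(\class{S-d})$. The first step gives computable finite spectra: $\T^{P}_{=}$ is $S$-decidable for every $S$, so deciding $\phi\wedge P_{k}$ in $\T\sqcup\T^{P}_{=}$ decides whether $k\in\spec_{\T}(\phi)$.

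The second step is an upward-closure claim: if $k\in\spec_{\T}(\phi)\setminus S$, then $[k,\aleph_{0}]\subseteq\spec_{\T}(\phi)$.
\begin{itemize}
\item Suppose some finite $m>k$ is not in $\spec_{\T}(\phi)$. The theory $\T^{m}_{k}$ is $S$-decidable because $k\notin S$.
\item By the table, $\phi\wedge P_{j}$ would be $\T\sqcup\T^{m}_{k}$-satisfiable iff $j\notin U$. Since $\phi$ is fixed, this would make $U$ decidable, a contradiction.
\item So every finite $m\geq k$ lies in the spectrum. The spectrum is then infinite, so \Cref{compactness2} gives $\aleph_{0}\in\spec_{\T}(\phi)$.
\end{itemize}

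The third step rules out an infinite spectrum that avoids $\overline{S}=\No\setminus S$. I will use $\T^{\overline{S}}_{\leq}$, which by the table is $\overline{\overline{S}}=S$-decidable.
\begin{itemize}
\item Suppose $\aleph_{0}\in\spec_{\T}(\phi)$ but $\spec_{\T}(\phi)\cap\overline{S}=\emptyset$.
\item The spectrum of $P_{j}$ in $\T^{\overline{S}}_{\leq}$ is $\overline{S}\cap[1,F(j)]$ or $\omg{\overline{S}}$. So $\spec_{\T}(\phi)$ meets it iff $F(j)=\aleph_{0}$.
\item Then \Cref{lem-fontaine} would decide $\{j: F(j)=\aleph_{0}\}$, a contradiction.
\end{itemize}
I expect this step to be the delicate one. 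The table's stated hypothesis for $\T_{\leq}^{S}$ is phrased loosely, and what is really needed is $\aleph_{0}\in\spec_{\T}(\phi)$, so I will verify this case directly from the axioms rather than rely on the table. Combining the second and third steps, every spectrum has one of two forms: either it is finite and contained in $S$, or it contains some $k\notin S$ and hence $[k,\aleph_{0}]$.

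\textbf{The algorithm.} First decide whether $\phi$ is $\T$-satisfiable, and output $(0,\emptyset)$ if not. Otherwise, for $k=1,2,\ldots$ do the following:
\begin{itemize}
\item If $\phi\wedge\neq(x_{1},\ldots,x_{k})$ is not $\T$-satisfiable, the spectrum is finite and bounded by $k$. Compute it using computable finite spectra and output $(0,\spec_{\T}(\phi))$, which is contained in $S$ by the second step.
\item Else, if $k\notin S$ and $k\in\spec_{\T}(\phi)$, stop at this first such $k$, call it $n$. Output $(n,\spec_{\T}(\phi)\cap[1,n-1])$, which is computable and contained in $S$ by minimality of $n$.
\end{itemize}
By the dichotomy above, one of the two branches is eventually reached. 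Its output is correct by the second step, so $\T$ is $S$-shiny.
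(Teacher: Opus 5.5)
Your proposal follows the paper's proof essentially step for step: the same test theories $\T^{S}_{=}$, $\Testcfs$, $\T^{m}_{k}$ and $\Tle^{\overline{S}}$, and the same search for the first finite spectrum element outside $S$. It is correct whenever $\overline{S}=\No\setminus S$ is infinite. However, your third step, which is also the paper's, fails when $\overline{S}$ is finite.

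In that case $\Tle^{\overline{S}}$ is not decidable, so it is not in $\class{S-d}$, and nothing forces $\T$ to be combinable with it. To see this, take $M>\max\overline{S}$ (any $M$ if $\overline{S}=\emptyset$). A $\Tle^{\overline{S}}$-model of $\neq(x_{1},\ldots,x_{M})\wedge P_{j}$ must then be infinite, so this formula is $\Tle^{\overline{S}}$-satisfiable iff $F(j)=\aleph_{0}$. Decidability of $\T^{X}_{\leq}$ genuinely needs $X$ infinite, so that you can compare the least element of $X$ above the minimal model size with $F(j)$.

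The step's conclusion is in fact false in this case, not merely unproved. Take $S=\No$. The pure theory of equality $\T_{Eq}$ is shiny, hence combinable with every decidable theory, hence in $G(\class{S-d})$. Yet the spectrum of $x=x$ is $[1,\aleph_{0}]$, which contains $\aleph_{0}$ and avoids $\overline{S}=\emptyset$. So your dichotomy fails, and your search runs forever on $x=x$. Your first branch never fires because the spectrum is unbounded, and your second never fires because there is no $k\notin S$. For an arbitrary cofinite $S$, the $S$-shiny theory axiomatized by $\{\neg\psi_{=k} : k\in\overline{S}\}$ shows the same behaviour.

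The missing idea is that for cofinite $S$ the tail $[n,\aleph_{0}]$ may lie entirely inside $S$, so you cannot locate it by looking for spectrum elements outside $S$. Here is a repair that works for every $S$ and makes your third step unnecessary. Since $\class{CFS}\subseteq\class{S-d}$, antitonicity of $G$ and the sharpness of gentle combination recalled in the introduction give $G(\class{S-d})\subseteq G(\class{CFS})=\class{gentle}$. So you can compute $\spec_{\T}(\phi)$ as a finite or cofinite set.

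If it is finite, your second step forces $\spec_{\T}(\phi)\subseteq S$, and you output $(0,\spec_{\T}(\phi))$. If it is cofinite, compute $N$ with $[N,\aleph_{0}]\subseteq\spec_{\T}(\phi)$. Let $n$ be the least element of $\spec_{\T}(\phi)\setminus S$ below $N$ if one exists, and $n=N$ otherwise. Output $(n,\spec_{\T}(\phi)\cap[1,n-1])$; correctness follows again from your second step.

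Alternatively, keep your argument when $\overline{S}$ is infinite and use this route only when $\overline{S}$ is finite. Note that for $S=\No$ the statement is exactly the known sharpness of gentle combination, which neither your argument nor the paper's actually reproves.
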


\begin{proof}
    \begin{enumerate}
        \item Suppose that $\T\in G(\class{S-s})$, and take a quantifier-free formula $\phi$:
        since $\T^{S}_{=}$ is decidable and $S$-shiny (see \Cref{theo:test-theories}), for any $n\in S$ we have $n\in\spec_{\T}(\phi)$ if and only if $\phi\wedge P_{n}$ is $\T\sqcup\T^{S}_{=}$-satisfiable, a problem for which we have an algorithm since $\T\sqcup\T^{S}_{=}$ is decidable;
        we thus have $\T$ is $S$-decidable.

\begin{figure}[t]
        \begin{algorithm}[H]
    \begin{algorithmic}[1]
    \Function{SS}{$\phi$}
        \State{$k\gets 1$}
        \State{$S_{0}\gets\emptyset$}
        \While{True}
        \If{$k\in\spec_{\T}(\phi)$}
            \If{$k\notin S$}
                \Return $(k,S_{0})$
            \EndIf
            \State{$S_{0}\gets S_{0}\cup\{k\}$}
            \State{$k\gets k+1$} 
        \Else    
            \If{$\phi\wedge\neq(x_{1},\ldots,x_{k})$ is not $\T$-satisfiable}
                \Return $(0,S_{0})$
            \EndIf
            \State{$k\gets k+1$}
        \EndIf
        \EndWhile
           \EndFunction 
\end{algorithmic}
    \caption{Pseudocode for showing $\T$ is $S$-shiny if $\T\in G(\class{S-d})$.} \label{alg:S-shiny-proof}
\end{algorithm}
\end{figure}

        \item Suppose now that $\T\in G(\class{S-d})$.
        First we show that $\T$ must have computable finite spectra.
        $\Testcfs$ is decidable and $S$-decidable by \Cref{theo:other-test-theories};
        this means $\T\sqcup\Testcfs$ is decidable, and $k\in\spec_{\T}(\phi)$ (for a quantifier-free $\phi$) if and only if $\phi\wedge P_{k}$ is $\T\sqcup\Testcfs$-satisfiable, proving $\T$ indeed has computable finite spectra.

We now prove $\T$ is $S$-shiny by showing, in \Cref{alg:S-shiny-proof}, the algorithm that outputs $(n,S_{0})$.
Checking (for $k\in \No$) whether $k\in S$ in line $5$ is possible because $S$ is computable, whether $k\in \spec_{\T}(\phi)$ in line $4$ is possible because we just proved $\T$ has computable finite spectra, and whether $\phi\wedge\neq(x_{1},\ldots,x_{k})$ is $\T$-satisfiable in line $10$ is possible because $\T$ is decidable.
We notice that $S_{0}$ is nothing but $\spec_{\T}(\phi)\cap[1,k-1]$.
Now, to see that our algorithm stops, there are two cases to consider:
whether $\spec_{\T}(\phi)\setminus S$ is empty or not.
        
If it is empty, and $\spec_{\T}(\phi)$ is infinite, using \Cref{compactness2} we get $\aleph_{0}\in\spec_{\T}(\phi)$, and that leads to a contradiction:
$\spec_{\T}(\phi)\setminus S=\emptyset$, $\aleph_{0}\notin S$ and $\aleph_{0}\in \spec_{\T}(\phi)$ cannot be simultaneously true.
So, if $\spec_{\T}(\phi)\setminus S$ is empty, $\spec_{\T}(\phi)$ must be finite (and contained in $S$), and eventually we will find a $k$ such that $k\notin\spec_{\T}(\phi)$ and $\phi\wedge\neq(x_{1},\ldots,x_{k})$ is not $\T$-satisfiable, when our algorithm must stop.
In that case it returns $(0,S_{0})$ for $S_{0}=\spec_{T}(\phi)\cap[1,k-1]$, which is a finite subset of $S$ and equals $\spec_{\T}(\phi)$.
        
So now we assume $\spec_{\T}(\phi)\setminus S\neq\emptyset$:
we will then prove $\spec_{\T}(\phi)$ has a finite value not in $S$, and our algorithm will stop when it finds the first such value, and will output $(k,S_{0})$ for $S_{0}=\spec_{\T}(\phi)\cap[1,k-1]$ a finite subset of $S$;
second, we will prove that the existence of a finite value $k$ in $\spec_{\T}(\phi)$ that is not in $S$ implies $[k,\aleph_{0}]\subseteq \spec_{\T}(\phi)$, so the spectrum indeed has the form suggested by the algorithm as $\spec_{\T}(\phi)$ equals the union of $\spec_{\T}(\phi)\cap[1,k-1]$ and $[k,\aleph_{0}]$.
        
Suppose, then, that $\spec_{\T}(\phi)\setminus S\neq\emptyset$ but $(\spec_{\T}(\phi)\setminus S)\cap \mathbb{N}=\emptyset$.
We then have that $\spec_{\T}(\phi)\setminus S=\{\aleph_{0}\}$, and therefore $\phi\wedge P_{i}$ is $\T\sqcup\Tle^{\overline{S}}$-satisfiable if and only if $F(i)=\aleph_{0}$, an undecidable problem.
This leads to a contradiction, as $\Tle^{\overline{S}}$, for $\overline{S}=\No\setminus S$, is decidable and has no interpretations with sizes in $S$, thus being $S$-decidable (see \Cref{theo:test-theories}). 
Next, suppose $n\in \spec_{\T}(\phi)\setminus S$ is finite and there is an $n<m<\aleph_{0}$ such that $m\notin\spec_{\T}(\phi)$.
Then $\phi\wedge P_{i}$ is $\T\sqcup\T_{n}^{m}$-satisfiable iff $i\notin U$, an undecidable problem.
This too leads to a contradiction, as $\T_{n}^{m}$ is decidable and $S$-decidable (\Cref{theo:test-theories}).
\lncsqed
    \end{enumerate}\let\qed\relax
\end{proof}

\subsection{Sharpness for $S$-finality}
We now prove that \Cref{theo:Ipart-Sf-Sr} is sharp, namely:
given $S$,
every decidable theory that can be combined with all $S$-final theories must be $S$-restricted,
and every decidable theory that can be combined with all $S$-restricted theories must be $S$-final. Again, we utilize operator $G$ from \Cref{sec:prem}, the test theory $\Test{S-r}^{m,n}$ to prove $G(\class{S-f})\subseteq \class{S-r}$, and $\T_{\leq}^{S}$, $\T_{=}^{S}$ and $\T_{n}^{m}$ to prove $G(\class{S-r})\subseteq \class{S-f}$.

\begin{restatable}{theorem}{IIpartSfSr}\label{theo:IIpart-Sf-Sr}
\label{prop:connectionsconn}

    $G(\class{S-f})\subseteq \class{S-r}$ and $G(\class{S-r})\subseteq \class{S-f}$.
\end{restatable}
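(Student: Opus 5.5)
For the first inclusion, $G(\class{S-f})\subseteq\class{S-r}$, I will argue by contradiction. Let $\T\in G(\class{S-f})$, and suppose $\phi$ is quantifier-free with $\spec_{\T}(\phi)\setminus S\neq\emptyset$ and $\aleph_{0}\notin\spec_{\T}(\phi)$. Since $\aleph_{0}\notin\spec_{\T}(\phi)$, \Cref{compactness2} makes $\spec_{\T}(\phi)$ finite. So there is a finite $n\in\spec_{\T}(\phi)\setminus S$, and $M_{\T}(\phi)=1+\max\spec_{\T}(\phi)$ is a natural number with $M_{\T}(\phi)>n$. Set $m=M_{\T}(\phi)$. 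By \Cref{theo:test-theories}, $\Test{S-r}^{m,n}$ is decidable and $S$-final, so $\T\sqcup\Test{S-r}^{m,n}$ is decidable. The table's property then says $\phi\wedge P_{k}$ is satisfiable in this combination iff $k\in U$. That would decide $U$, which is a contradiction. Hence $\T$ is $S$-restricted.

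For the second inclusion, let $\T\in G(\class{S-r})$. I will build an algorithm for $S$-finality in three steps.
\begin{enumerate}
\item Combine $\T$ with $\T^{S}_{=}$, which is decidable and $S$-restricted. For $k\in S$, $k\in\spec_{\T}(\phi)$ iff $\phi\wedge P_{k}$ is satisfiable in $\T\sqcup\T^{S}_{=}$. So membership of elements of $S$ in spectra is decidable.
\item Show that $\T$ is stably infinite. Suppose $\spec_{\T}(\phi)\neq\emptyset$ but $\aleph_{0}\notin\spec_{\T}(\phi)$; the spectrum is then finite by \Cref{compactness2}.
\begin{enumerate}
\item If some $n\in\spec_{\T}(\phi)\cap S$ exists, take $m\in S$ with $m>\max\spec_{\T}(\phi)$, provided $S$ is unbounded. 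Then $\T^{m}_{n}$ is decidable and $S$-restricted, and its table property decides the complement of $U$, a contradiction.
\item If instead $\spec_{\T}(\phi)\cap S=\emptyset$, use $\T^{S}_{\leq}$, which is decidable and $S$-restricted. Its table property says $\phi\wedge P_{k}$ is satisfiable iff $F(k)=\aleph_{0}$, which is undecidable, again a contradiction.
\item If $S$ is finite, case (a) needs another argument. The combinations with $\T^{m}_{n}$ for $m,n\in S$ only handle spectra with at least two elements of $S$. I will instead try to show that the intersection with $\omg{S}$ is still determined by a threshold.
\end{enumerate}
\item With stable infiniteness in hand, run an algorithm in the style of \Cref{alg:S-shiny-proof}. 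If $\phi$ is unsatisfiable, return $0$. Otherwise search $k=1,2,\ldots$ for the least $k$ satisfying one of two conditions, and return $n=k$. The first condition is that $k\in S$ and $k\in\spec_{\T}(\phi)$, which is decidable by step 1. The second is that no $\T$-model of $\phi\wedge\neq(x_{1},\ldots,x_{k})$ has size in $S$ below $k$; this check needs care. Termination follows because $\aleph_{0}\in\spec_{\T}(\phi)$.
\end{enumerate}

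The correctness requirement in step 3 is that $\spec_{\T}(\phi)\cap\omg{S}$ is upward closed within $\omg{S}$. So I must show that if $p<q$ lie in $S$, $p$ is in the spectrum and $q$ is not, we get a contradiction. For this I will combine with $\T^{q}_{p}$, which is $S$-restricted because $p,q\in S$. Its table property applies when $q\notin\spec_{\T}(\phi)$ and $p\in\spec_{\T}(\phi)$, giving that $\phi\wedge P_{k}$ is satisfiable iff $k\notin U$. This decides $U$, a contradiction.

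Once upward closure holds, the least element of $\spec_{\T}(\phi)\cap S$ is the right $n$. If that intersection is empty, any $n$ beyond $S$ works; this can only happen when $S$ is finite, and then the largest element of $S$ is computable. Thus the output is the least $k\in S$ in the spectrum, or $\max S+1$ if no such $k$ exists. In the infinite case, the search over $S$ terminates once the spectrum is known to meet $S$ cofinally.

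I expect the main obstacle to be case 2(c): the finite, or bounded, $S$ subcase of stable infiniteness, where no $m\in S$ exceeds the spectrum. There I will check whether upward closure within $\omg{S}$, together with stable infiniteness, actually suffices. I will also check whether the definition of $S$-final requires $\aleph_{0}\in\spec_{\T}(\phi)$ exactly through $\aleph_{0}\in\omg{S}$, which would make stable infiniteness unavoidable and might need a finer test theory.
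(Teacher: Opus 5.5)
Your first inclusion is the paper's argument, with the appeal to \Cref{compactness2} made explicit. In the second inclusion, your use of $\T^{S}_{=}$ to decide membership of elements of $S$, and of $\T^{q}_{p}$ for upward closure within $S$, also matches the paper.

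\textbf{The gap.} The gap is in showing that $\spec_{\T}(\phi)\cap S\neq\emptyset$, and in stable infiniteness. In step 2(b) you assume $\aleph_{0}\notin\spec_{\T}(\phi)$. But $\T^{S}_{\leq}$ only has models of size in $\omg{S}$. So if $\spec_{\T}(\phi)$ is finite and disjoint from $S$, then $\phi\wedge P_{k}$ is $\T\sqcup\T^{S}_{\leq}$-unsatisfiable for \emph{every} $k$. That is trivially decidable, so there is no contradiction. The ``iff $F(k)=\aleph_{0}$'' behaviour comes from the infinite model, so it needs $\aleph_{0}\in\spec_{\T}(\phi)$; the table leaves this hypothesis implicit.

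Conversely, the one situation where $\T^{S}_{\leq}$ is genuinely needed is never excluded: $S$ infinite and $\spec_{\T}(\phi)\cap\omg{S}=\{\aleph_{0}\}$. You only assert that an empty $\spec_{\T}(\phi)\cap S$ ``can only happen when $S$ is finite'', and that the search halts ``once the spectrum is known to meet $S$ cofinally''. But upward closure gives cofinality only once you know the spectrum meets $S$ at all. Without this, the search for the least $k\in S\cap\spec_{\T}(\phi)$ need not terminate. Ruling this case out is exactly the first step of the paper's proof of the second inclusion.

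\textbf{How to close it.} Both holes close with tools you already have.
\begin{enumerate}
\item The configuration of 2(b), a finite spectrum with a point outside $S$, is just a failure of $S$-restrictedness. Since $\class{S-f}\subseteq\class{S-r}$ and $G$ is antitone, your first inclusion gives $G(\class{S-r})\subseteq G(\class{S-f})\subseteq\class{S-r}$, which excludes it.
\item Invoke $\T^{S}_{\leq}$ only when $\aleph_{0}\in\spec_{\T}(\phi)$. For infinite $S$ this produces a finite element of $\spec_{\T}(\phi)\cap S$. Upward closure then makes the spectrum infinite, and \Cref{compactness2} gives $\aleph_{0}$, so your step 2 becomes unnecessary.
\item For bounded $S$ (which the paper does not treat, since it uses that $S$ is infinite), take $\Test{S-r}^{m,n}$ with $m=1+\max\spec_{\T}(\phi)$ and any $n\in\spec_{\T}(\phi)$, even $n\in S$. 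The proof of \Cref{theo:test-theories} shows every satisfiable formula of this theory has an infinite model. So it is decidable and stably infinite, hence $S$-restricted, and $\phi\wedge P_{k}$ is satisfiable in the combination iff $k\in U$. Hence $\T$ is stably infinite, and your threshold works: the least element of $S\cap\spec_{\T}(\phi)$, or any number above every element of $S$ if there is none.
\end{enumerate}

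Finally, drop the ``second condition'' of step 3. Every model of $\neq(x_{1},\ldots,x_{k})$ has at least $k$ elements, so the condition holds vacuously and the search would stop at $k=1$.
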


\begin{proof}
    \begin{enumerate}
        \item Suppose that $\T\in G(\class{S-f})$ is not $S$-restricted, that is, there are a quantifier-free formula $\phi$, an $n\in \spec_{\T}(\phi)\setminus S$ and a maximum element of $\spec_{\T}(\phi)$;
        define $m=1+\max\spec_{\T}(\phi)$.
        We state that $\Test{S-r}^{m,n}$ is decidable and $S$-final, so $\T\sqcup\Test{S-r}^{m,n}$ should be decidable;
        yet $\phi\wedge P_{k}$ is $\T\sqcup\Test{S-r}^{m,n}$-satisfiable if and only if $k\in U$, leading to a contradiction.

        \item Suppose that $\T\in G(\class{S-r})$:
        first we prove that, if $\phi$ is $\T$-satisfiable (otherwise we just return $0$), there is a finite number in $\spec_{\T}(\phi)\cap \omg{S}$, for any quantifier-free formula $\phi$.
        Were that not the case we would get $\phi\wedge P_{k}$ is $\T\sqcup\T_{\leq}^{S}$ if and only if $F(k)=\aleph_{0}$, what contradicts the fact that $\T_{\leq}^{S}$ is $S$-restricted.

        Next we show how the least element of $\spec_{\T}(\phi)\cap \omg{S}$ can be calculated:
        using that $\T^{S}_{=}$ is decidable and $S$-restricted, we simply search for the first $k\in S$ such that $\phi\wedge P_{k}$ is $\T\sqcup\T^{S}_{=}$-satisfiable, which must be finite by what we have seen above.
        Finally, suppose $n\in \spec_{\T}(\phi)\cap S$ but, for some $n<m<\aleph_{0}$ in $S$ (which we can always find as $S$ is infinite), $m\notin \spec_{\T}(\phi)$.
        Then $\phi\wedge P_{k}$ is $\T\sqcup\T_{n}^{m}$-satisfiable if and only if $k\notin U$, contradicting the fact that $\T_{n}^{m}$ is decidable and $S$-restricted. \lncsqed
    \end{enumerate}\let\qed\relax
    \end{proof}

\section{Conclusion and Future Work}
\label{sec:conclusion}

We have presented two infinite families of sharp combination methods resembling shininess.
One motivates gentleness as simply one extreme of a range of properties, the other extreme being shininess itself.
The other requires very little from the weaker theory to be combined, not unlike Nelson--Oppen and polite theory combination.
This is in line with \cite{POPL}, which suggests that new combination methods should also be proven sharp given the new techniques available to researchers in the area.

Directions for future work are many:
the Galois connection of theory combination makes finding new combination methods much easier;
but perhaps more interesting than finding new methods would be trying to generalize those we have at hand to the many-sorted case, a task currently being carried out.

\section*{Acknowledgments}
This research was supported in part by 
Defense Advanced Research Projects Agency (DARPA) contract FA875024-2-1001 and
Israel Science Foundation (ISF) grant number 209/25.

\bibliography{bib}

\end{document}